\newtheorem{thm}{Theorem}[section]
\newtheorem{lem}[thm]{Lemma}
\DeclarePairedDelimiter\floor{\lfloor}{\rfloor}
\begin{document}
%
% paper title
% can use linebreaks \\ within to get better formatting as desired
% Do not put math or special symbols in the title.
\title{Cost Effective Rumor Containment in Social Networks}

% author names and affiliations
% use a multiple column layout for up to three different
% affiliations
% author names and affiliations
% use a multiple column layout for up to three different
% affiliations
\author{\IEEEauthorblockN{Bhushan Kotnis and Joy Kuri}

\IEEEauthorblockA{
Dept. of Electronic Systems Engineering, 
Indian Institute of Science, 
Bangalore.  
\{bkotnis,kuri\}@dese.iisc.ernet.in
}
}
\maketitle

% As a general rule, do not put math, special symbols or citations
% in the abstract
\begin{abstract}
The spread of rumors through social media and online social networks can not only disrupt the daily lives of citizens, but also result in loss of life and property. A rumor spreads when individuals, who are unable decide the authenticity of the information, mistake the rumor as genuine information and pass it to their acquaintances. We propose a solution where a set of individuals, characterized by their degree in the social network, are trained and provided resources to help them distinguish a rumor from genuine information. By formulating an optimization problem we calculate the set of individuals, who must undergo training, and the quality of training that minimizes the expected training cost and ensures an  upper bound on the size of the rumor outbreak. Our primary contribution is, that although the optimization problem turns out to be non linear, we show that the problem is equivalent to solving a set of linear programs. This result also allows us to solve the problem of minimizing the size of rumor outbreak for a given cost budget. The solution of the optimization problem demonstrates  that increasing the number of trained individuals is better than improving the quality of training. Furthermore, the optimum set of trained individuals displays a pattern which can be converted to an implementable heuristic. These results can prove to be very useful to social planners and law enforcement agencies for preventing dangerous rumors and misinformation epidemics.
\end{abstract}

\IEEEpeerreviewmaketitle

\section{Introduction}
The past decade has seen a dramatic increase in the usage of online social networking (OSN) and microblogging services \cite{Java2007} like Facebook, Google+, Twitter, etc. Apart from their usefulness in helping individuals keep in touch, they are increasingly being used for disseminating information about events happing in real time \cite{Leskovec2009}. Due to the proliferation of smart phones, individuals can easily capture the unfolding of events in real time and can share it with others via social media instantaneously. Social media played a key role in sharing of information in the immediate aftermath of: the Fukushima nuclear accident \cite{Friedman2011}, hurricane Sandy \cite{Gupta2013} and the Boston marathon bombings \cite{Cassa2013}. Online social networks, microblogging, and short messaging services can be extremely useful in aiding the dissemination of time critical and potentially life saving information during large scale human emergencies. 
\par
However, along with useful information, OSN  can also aid the spread of rumors, and can even facilitate the spread of a dangerous misinformation epidemic \cite{Gupta2013}. Since these services operate in a decentralized manner, i.e., absence of a central authority for guaranteeing the authenticity of the information, careless individuals can accidentally initiate and propagate rumors. Furthermore, the distributed nature of social media can be exploited by malicious agents for spreading dangerous misinformation epidemics. This problem has been a growing concern for administrative authorities \cite{chineserumor}. In the recent past, with the goal of preventing the spread of rumors, the Indian Government ordered cell phone operators to impose a limit on the number of text messages that can be sent by any individual \cite{Franke2012}. However, such ad-hoc measures are costly and also not very effective. In this article we propose a cost effective rumor prevention mechanism which provides guarantees on the size of the rumor outbreak.
\par
The problem of maximizing the spread of information in social networks is well known \cite{Kempe2003,Chen2009,Chen2010}. Approaches such as calculating the optimal (or approximately optimal) set of seed nodes \cite{Kempe2003, Kempe2005,Chen2009} and optimal control \cite{Karnik2012,Dayama2012,Kandhway2014} have been proposed. The reverse problem of limiting the spread of rumors has also received some attention and is termed as least cost rumor blocking problem  \cite{Fan2013}, misinformation containment problem  \cite{Nguyen2012}, influence blocking maximization problem  \cite{He2012}. The predominant approach used to address this problem is to identify a set of target seed nodes who will spread an `anti-rumor' \cite{Tripathy2010,Budak2011,Nguyen2012, He2012,Fan2013} to combat the rumor. The problem of limiting the spread of virus or malware in computer networks is analogous to the problem of limiting the spread of rumors, and studies \cite{Khouzani2010,Khouzani2010b} have used optimal control techniques for achieving the same. However, the optimal control solution is difficult to implement as it requires a centralized real time controller. 
\par
Our approach, which involves training nodes to help them distinguish between the rumor and true information, is similar to link blocking in \cite{Kimura2008}  and node removal strategy in\cite{Cohen2003, Aspnes2005,Habiba2010}.  However, apart from \cite{Aspnes2005} most studies do not consider the cost of blocking links or nodes.  The problem of minimizing the cost of immunizing nodes for preventing the spread of a disease (or rumor) was studied in  \cite{Aspnes2005} and it was shown to be an NP-Hard  problem. Their model assumed perfect immunization (immunized node cannot transmit a rumor), while our model allows for partial immunization (immunized node can transmit a rumor). In this study we formulate an optimization problem in which, apart from the set of nodes to be immunized, the immunization quality is also a control variable, and which also incorporates cost budget constraints and constraint on the number of immunized nodes. Thus, the problem we are considering is more richer and general than the one in \cite{Aspnes2005}.
\par
The straightforward approach of calculating the set of nodes which minimizes the cost and ensures prevention of rumor outbreak suffers from a combinatorial explosion problem. For the purpose of analytical tractability we characterize nodes solely based on their degree.  Although a degree centrality based approach may not be optimal, it has been shown to work very well in scale free networks \cite{Albert2000, Callaway2000, Cohen2003,Habiba2010}. Furthermore, a solution based on the degree based characterization of nodes is very simple to implement.
\par  
We propose a method in which a set of individuals, characterized by their degree, are trained to distinguish rumors from useful information. Such training may also involve allocating costly resources, such as access to security cameras, drone or satellite photography, or access to a group of experts which help individuals make the correct decision. Thus, training individuals is a costly affair. We seek to identify the set of individuals to be trained, that minimizes the expected training cost and prevents a rumor outbreak by formulating and solving an optimization problem. We also address the problem of minimizing the size of the rumor outbreak for a given cost budget.
\par
 Our model assumes two types of individuals viz. untrained and trained, and they are connected to one another through a social network.  Both, the untrained and the trained are unable to distinguish between the rumor with certainty, however trained individuals can distinguish a rumor from an information better then the untrained ones.  Individuals are characterized by their degree, and the cost of training is assumed to be proportional to their degree. The \emph{Independent Cascade}  (IC) model discussed in \cite{Kempe2003} is used to model the rumor propagation.
 \par
  For a given set of trained individuals, we first calculate the size of the rumor outbreak using network percolation theory, and then find the set of nodes which 1. minimizes the cost for a given outbreak size and 2. minimizes outbreak size for a given cost budget. Both the optimization problems are found to be a non linear. Our primary contribution is, that we show that the nonlinear problem can be addressed by solving a set of linear programs.  Furthermore, we discover that the set of individuals which need to be trained displays an interesting pattern: it turns out that when, the cost is linear in terms of node degree, low degree individuals are more important than high degree individuals for the purpose of rumor prevention.
\par
Our contributions are summarized as follows :
\begin{itemize}
\item{We calculate the size of a rumor outbreak using bond percolation theory.}
\item{We formulate two optimization problems viz. minimize cost subject to limit on outbreak probability and minimize outbreak probability subject to a budget on the cost. The two problems turn out to be non linear programs and we show that they can be solved by solving a set of linear programs.}
\item{We solve the problems for  sub-linear, linear and a super linear cost structure. For the linear case the solution provides useful insights such as: when more resources are available it is best to increase the quantity of trained individuals than increasing the quality of training. Also, the optimum set of trained individuals follows a pattern which can be easily converted to an implementable heuristic.}
\end{itemize}     
\par
The article is organized as follows. The model is described in Sec. \ref{sec:model}, the analysis in Sec. \ref{sec:analysis}, the main results in Sec. \ref{sec:results} and the conclusions in Sec. \ref{sec:conclusion}.
   
\section{Model \label{sec:model}}
 We divide the total population of $N$ individuals into two types: the untrained (type $1$) and the trained (type $2$). These individuals are connected with one another through a social network, which is represented by an undirected graph (network). Nodes represent individuals while a link embodies the communication pathways between individuals. For the sake of analytical tractability we make the following approximation. Instead of analyzing the adjacency matrix of the network, we obtain statistical information about the social network by calculating its degree distribution (probability that a randomly chosen node has $k$ neighbors). We then generate a synthetic network with the obtained degree distribution using the configuration model procedure \cite{molloy1995}. A sequence of $N$ integers, called the degree sequence, is obtained by sampling the degree distribution. Thus each node is associated with an integer which is assumed to be the number of half edges or stubs associated with the node. Assuming that the total number of stubs is even, each stub is chosen at random and joined with another randomly selected stub. The process continues until all stubs are exhausted. Self loops and multiple edges are possible, but the number of such self loops and multiple edges goes to zero as $N \to \infty$ with high probability. The network obtained by this process is termed as configuration model.  We assume that $N$ is large but finite.
 \par
  Let $P(k)$ be the degree distribution of the social network. Individuals can receive a messages which may be benign or malicious (rumors). We assume that the rumor deals with a single and specific topic, and hence we represent  it as a data message, i.e., rumor message $R$.  We propose an approach where the social planners train individuals and provide them resources such as access to real time photography from security cams, or access to opinions from a group of experts,  for helping them recognize rumors from true information. We assume that all trained individuals receive the same quality of training and resources. Let $\phi (k)$ be the proportion of individuals with $k$ degrees recruited for training. 
\par 
 We assume that initially a randomly chosen individual acts as a seed and transmits the $R$ message to \emph{all} its neighbors with probability $1$.  However, the individual who receives this message cannot recognize with certainty if it is a rumor, and hence she makes a guess about the nature of the message. She transmits the message to all her neighbors only if she believes that it is not a rumor. If she thinks that it is a rumor then she does not transmit it to any of her neighbors. An individual who receives the message and decides to spread it to her neighbors, does so only once and such a person is termed as a \emph{believer}. An individual who has not received the message is termed as a \emph{nonbeliever}. This models the scenario where an individual may hear about an event and report it to all her friends based on her gut feeling about the nature of the message. A believer cannot revert back to being a nonbeliever, while a nonbeliever who receives a message and concludes its a rumor is assumed to remain a nonbeliever. We call he link connecting a believer to an untrained node as a type $1$ link, and the link connecting a believer to a trained node as a type $2$ link.
 \par
 Let $H_0$ be the hypothesis  that the received message is not $R$, while $H_1$ be the hypothesis that received message is $R$.  Thus an individual who receives a rumor message transmits it with the probability  $Pr \{H_0 |$it is a rumor message$ \}$, i.e., it is the probability of the event that she fails to identify the true nature of the message. Let $T_1$ and $T_2$ be the probability of misclassifying the rumor as information for untrained and trained individuals respectively. We refer to $T_1$ as the force of rumor. Due to the training $T_2 < T_1$. However, no amount of training can accurately identify the nature of the message, hence $T_1, \ T_2 \ \in \ (0,1)$.
 \par
 We present the following scenario as an illustrative example. An untrained individual $A$ receives a rumor message, she makes an error in identifying it as a rumor, and hence passes on this message to all her connections. This event happens with probability $T_1$. A trained individual  $B$, receives this message from $A$, but unlike $A$ she correctly identifies the message as a rumor and hence does not send it to her connections. This event happens with probability $1 - T_2$. The proportion of believers, after the process terminates, is termed as size of the outbreak. This model is also termed as the independent cascade model \cite{Kempe2003}. The process can also be viewed as a rumor propagating through links when they are active. Thus a type $1 $ link is active with probability $T_1$ while a type $2$ link is active with probability $T_2$. These  are also called bond occupation probabilities \cite{Newman2002}.  
 
 \section{Analysis \label{sec:analysis}}
 We use bond percolation theory to analyze the rumor spreading process.  Let $P(k' \mid k)$ be the probability of encountering a node of degree $k'$ by traversing a randomly chosen link from a node of degree $k$. In other words, $P(k' \mid k)$ is the probability that a node with degree $k$ has a neighbor with degree $k'$. For a network generated by configuration model, $P(k' \mid k)= k'P(k')/<k>$ \cite{Newman2010}, which is independent of $k$, where $<k^i>$ is the $i^{th}$ moment of $P(k)$.
 \par
  Let $q$ be the probability of encountering a trained node by traversing a randomly chosen link from a node of degree $k$. Therefore, $q = \sum\limits_{k'=1}^{\infty}Pr($Neighboring node is trained $\mid$ neighboring node has degree $k')\cdot Pr($Neighboring node has degree $k'\mid$ original node has degree $k)$. 
 \begin{align*}
 q= \frac{1}{<k>}\sum_{k=1}^{\infty}k\phi(k)P(k)
 \end{align*}
 \par
 The probability that a randomly chosen node has $k_1$ untrained and $k_2$ trained neighbors $= P(k_1,k_2) = \sum\limits_{k:k=k_1+k_2}^{\infty} Pr(k_1,k_2\mid$node has degree $k)P(k) $. 
 \par
 \vspace{1pt}
 The event that a given node has degree $k$, is independent of the event that another node, having a common neighbor with the given node, has degree $k'$. This is true since the degree sequence is generated by independent samples from the distribution. The probability that a node is trained, is a function of its degree, hence the event that $A$ is trained (untrained) is independent of the event that $B$ is trained (untrained). This allows us to write:   
 \par
 \begin{align*}
  P(k_1,k_2)=  {k_1+k_2 \choose k_2}q^{k_2}(1-q)^{k_1}P(k_1+k_2) 
 \end{align*}
% Due to this independence, $P(k_1,k_2) = Pr($node has $k_1$ ignorant neighbors$)\cdot Pr($node has $k_2$ trained neighbors$) = P_1(k_1)P_2(k_2)$. 
\par
 Let $Q(k)$ be the excess degree distribution, i.e., the degree distribution of a node arrived at by following a randomly chosen link without counting that link. For the configuration model $Q(k) = (k+1)P(k+1)/ <k>$. Let $Q(k_1,k_2)$ be the excess degree distribution for connections to untrained and trained nodes.
  \begin{align*}
  Q(k_1,k_2) = {k_1+k_2 \choose k_2}q^{k_2}(1-q)^{k_1}Q(k_1+k_2)
  \end{align*}
%   Also $Q(k_1,k_2) = Q_1(k_1)Q_2(k_2)$, where $Q_1(k_1)$ and $Q_2(k_2)$ are the excess degree distribution counterparts of $P_1(k_1)$ and $P_2(k_2)$ respectively.
 \par
 %Let $\tilde{P}(k)$ and $\tilde{Q}(k)$ be the distribution and the excess distribution of the number of neighbors who believe in the rumor.
 Let $\tilde{P}(\tilde{k}_1,\tilde{k}_2)$ and $\tilde{Q}(\tilde{k}_1,\tilde{k}_2)$ be the distribution and the excess distribution of the number of untrained and trained neighbors who are believers. In other words the distribution and the excess distribution of type $i$ occupied links.
\par
{ \small
 \begin{align*}
 \tilde{P}(\tilde{k}_1,\tilde{k}_2) &= \sum_{k_1=\tilde{k}_1}^{\infty}\sum_{k_2=\tilde{k}_2}^{\infty}P(k_1,k_2)\prod_{i=1}^{2}{k_i \choose \tilde{k}_i}T_i^{\tilde{k}_i}(1-T_i)^{k_i-\tilde{k}_i} \\
 \tilde{Q}(\tilde{k}_1,\tilde{k}_2) &= \sum_{k_1=\tilde{k}_1}^{\infty}\sum_{k_2=\tilde{k}_2}^{\infty}Q(k_1,k_2)\prod_{i=1}^{2}{k_i \choose \tilde{k}_i}T_i^{\tilde{k}_i}(1-T_i)^{k_i-\tilde{k}_i}
 \end{align*}
 }
 \par
% Also,  $\tilde{P}(\tilde{k}_1,\tilde{k}_2) = \tilde{P}_1(\tilde{k}_1)\tilde{P}_2(\tilde{k}_2) $ and $\tilde{Q}(\tilde{k}_1,\tilde{k}_2) = \tilde{Q}_1(\tilde{k}_1)\tilde{Q}_2(\tilde{k}_2)$. Where $\tilde{P}_i(\tilde{k}_i)$ and $\tilde{Q}_i(\tilde{k}_i)$ are the marginals. 
      \begin{table}[!t]
      \centering
        \begin{tabular}{l  l}
   \hline
   Generating function & Distribution \\
    \hline
   $G(u_1,u_2)$ & $P(k_1,k_2)$ \\
    $F(u_1,u_2)$ & $Q(k_1,k_2)$ \\
       $\tilde{G}(u_1,u_2)$ & $\tilde{P}(\tilde{k}_1,\tilde{k}_2)$ \\
        $\tilde{F}(u_1,u_2)$ & $\tilde{Q}(\tilde{k}_1,\tilde{k}_2)$ \\
   $\tilde{H}_i(u_1,u_2)$ & No. of untrained and trained nodes, \\ 
   & who are believers, in a component \\   
   & reached from a type $i$ link.\\
       $\tilde{J}_i(u_1,u_2)$ &  No. of untrained and trained nodes \\  
      & who are believers, in a component \\ 
         & reached from a node $i$.\\
          $\tilde{J}(u_1,u_2)$ &  No. of untrained and trained nodes \\  
   & who are believers, in a component \\ 
      & reached from a randomly chosen node.\\
     \hline
     \end{tabular}
     \caption{List of probability generating functions.}
     \label{table:pgf}
           \end{table}    
 The probability generating functions, $G(u) = \sum\limits_{k=0}^{\infty} u^kP(k)$, for the distributions discussed above are listed in Table \ref{table:pgf}.
 \par
 Now, $\tilde{G}(u_1,u_2)$ is given by
  \begin{align*}
  & \sum_{\tilde{k}_1,\tilde{k}_2}^{\infty}u_1^{\tilde{k}_1}u_2^{\tilde{k}_2}\sum_{k_1 = \tilde{k}_1}\sum_{k_2=\tilde{k}_2}P(k_1,k_2)\prod_{i=1}^{2}{k_i \choose \tilde{k}_i}T_i^{\tilde{k}_i}(1-T_i)^{k_i-\tilde{k}_i} \\
  &= \sum_{k_1,k_2}^{\infty}(1+(u_1-1)T_1)^{k_1}(1+(u_2-1)T_2)^{k_2}P(k_1,k_2) \\
  &= G\left(1+(u_1-1)T_1,1+(u_2-1)T_2\right)
 \end{align*}
 Similarly, $\tilde{F}(u_1,u_2) = F(1+(u_1-1)T_1,1+(u_2-1)T_2)$ 
 \par
A component is a \emph{small} cluster of nodes who have become believers due to rumor propagation. By small we mean that the cluster is finite and does not scale with the network size. However, at the phase transition the average size of the cluster diverges (as $N \to \infty$). A rumor outbreak is possible only when the phase transition threshold is crossed and the average size of the cluster diverges. In this regime the component is termed as a giant connected component (GCC) and it grows with the network size. Let $\tilde{H}_i(u_1,u_2)$ be the generating function of the distribution of the number of untrained and trained nodes in a component which is arrived at from a type $i$ link, i.e., edge whose end is type $i$ node. The component must contain at least one node of type $1$ or type $2$. Let  $\tilde{J}_i(u_1,u_2)$ and $\tilde{J}(u_1,u_2)$ be the generating functions of the distribution of the number of untrained and trained nodes in a component arrived at from node $i$ and a randomly chosen node respectively. 
 \begin{figure}
 \centering
  \includegraphics[width = 0.45\textwidth]{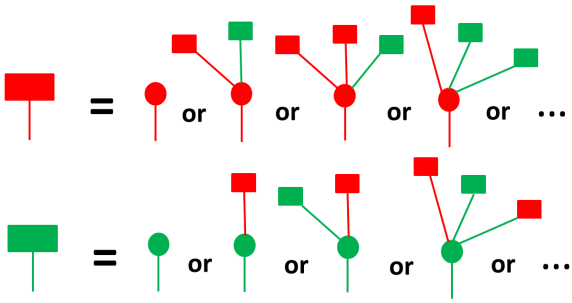}
 \caption{(Color Online) Illustration of components. The red boxes represent the components reached by type $1$ link while green boxes represent components reached by a type $2$ link. A trained node is represented by a green circle while a red circle represents an untrained node.  }
 \label{fig:percolation}
 \end{figure}

\par
The probability of encountering closed loops in finite cluster is $O(N^{-1})$ \cite{Newman2002} which can be neglected for large $N$. The tree like structure of the cluster allows us to write the size of the component encountered by traversing a link as the sum of the size of components encountered after traversing the links of the end node. This is illustrated in Fig. \ref{fig:percolation}. Since, the size of the components along different links are mutually independent (absence of loops) we can write the following equation. 
\begin{align*}
\tilde{H}_i(u_1,u_2) &= u_i \sum_{k_1,k_2}^{\infty} \tilde{H}_1^{k_1}(u_1,u_2)\tilde{H}_2^{k_2}(u_1,u_2)\tilde{Q}(\tilde{k}_1,\tilde{k}_2) \\
&= u_i\tilde{F}(\tilde{H}_1(u_1,u_2) ,\tilde{H}_2(u_1,u_2) ) \\
\end{align*}
Which can also be written as 
\par
$\tilde{H}_i(u_1,u_2) =$ 
\begin{align}
u_iF\left(1+(\tilde{H}_1(u_1,u_2)-1)T_1,1+(\tilde{H}_2(u_1,u_2)-1)T_2 \right) \label{eqn:pgfCluster}
\end{align}

The multiple $u_i$ results from the end node of the link while  $\tilde{Q}(\tilde{k}_1,\tilde{k}_2)$ is the excess distribution of the occupied type $i$ links. Similarly, $\tilde{J}(u_1,u_2)$ can be expressed as :
\begin{align*}
\tilde{J}_i(u_1,u_2) &= u_i\sum_{k_1,k_2}^{\infty} \tilde{H}_1^{k_1}(u_1,u_2)\tilde{H}_2^{k_2}(u_1,u_2)\tilde{P}(\tilde{k}_1,\tilde{k}_2) \\
&= u_i\tilde{G}(\tilde{H}_1(u_1,u_2) ,\tilde{H}_2(u_1,u_2) ) \\
\tilde{J}(u_1,u_2) &= (1-p)\tilde{J}_1(u_1,u_2) +p \tilde{J}_2(u_1,u_2)
\end{align*}
where $p$ is the probability of selecting a trained node, $p = \sum \limits_{k=1}^{\infty}P(k)\phi(k)$.
\par
The following theorem  describes the phase transition conditions required for a rumor outbreak and the size of the such a rumor outbreak. 
\begin{thm}
The condition required for a small cluster to become a giant connected component is given by: $\tilde{\nu}\geq 1 $, where 
\begin{align*}
\tilde{\nu} = T_1\sum_{k_1,k_2}^{\infty}k_1Q(k_1,k_2) + T_2\sum\limits_{k_1,k_2}^{\infty}k_2Q(k_1,k_2) 
\end{align*}
 and the size of the giant connected component is given by $1-\psi$,
\begin{align*}
\psi = \sum_{k_1,k_2}^{\infty}(1+(u^*-1)T_1)^{k_1}(1+(u^*-1)T_2)^{k_2}P(k_1,k_2)
\end{align*}
where $u^*$ is the solution of the fixed point equation
\begin{align*}
u =  \sum_{k_1,k_2}^{\infty}(1+(u-1)T_1)^{k_1}(1+(u-1)T_2)^{k_2}Q(k_1,k_2)
\end{align*}
\end{thm}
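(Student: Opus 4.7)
The plan is to evaluate the recursive functional equations derived in the preceding section at $u_1=u_2=1$ and exploit the fact that $\tilde{H}_i(1,1)$ is precisely the probability that the component reached along a type $i$ link is finite (below the GCC threshold this equals $1$; above it, it is strictly less than $1$ and the missing mass is the probability of landing in the GCC). First I would substitute $u_1=u_2=1$ into \eqref{eqn:pgfCluster} to obtain
\begin{equation*}
\tilde{H}_i(1,1) = F\bigl(1+(\tilde{H}_1(1,1)-1)T_1,\;1+(\tilde{H}_2(1,1)-1)T_2\bigr).
\end{equation*}
The right-hand side does not depend on $i$, so $\tilde{H}_1(1,1)=\tilde{H}_2(1,1)\equiv u^*$, and $u^*$ must satisfy
\begin{equation*}
u^* = F\bigl(1+(u^*-1)T_1,\;1+(u^*-1)T_2\bigr),
\end{equation*}
which after expanding $F$ in terms of $Q(k_1,k_2)$ is exactly the fixed-point equation stated in the theorem.

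Next I would compute the overall probability that a randomly chosen node sits in a finite cluster. By the definition of $\tilde{J}$ in the text,
\begin{equation*}
\tilde{J}(1,1) = (1-p)\tilde{J}_1(1,1) + p\,\tilde{J}_2(1,1) = \tilde{G}(u^*,u^*),
\end{equation*}
since both $\tilde{J}_i(1,1)$ collapse to $\tilde{G}(u^*,u^*)$. Substituting $\tilde{G}(u_1,u_2)=G(1+(u_1-1)T_1,1+(u_2-1)T_2)$ and expanding via $P(k_1,k_2)$ reproduces the expression for $\psi$. The size of the GCC is then $1-\psi$, because a randomly chosen node lies either in a finite component (probability $\psi$) or in the GCC.

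For the phase-transition condition I would linearise the fixed-point equation for $u^*$ about the trivial root $u^*=1$. Writing $u^* = 1-\epsilon$ and Taylor-expanding to first order in $\epsilon$ gives
\begin{equation*}
\epsilon \;=\; \epsilon\Bigl(T_1 F_{u_1}(1,1) + T_2 F_{u_2}(1,1)\Bigr) + O(\epsilon^2),
\end{equation*}
and since $F_{u_1}(1,1)=\sum k_1 Q(k_1,k_2)$ and $F_{u_2}(1,1)=\sum k_2 Q(k_1,k_2)$, a nontrivial root $u^*<1$ bifurcates off the trivial root exactly when
\begin{equation*}
\tilde{\nu} \;=\; T_1\!\sum_{k_1,k_2} k_1 Q(k_1,k_2) + T_2\!\sum_{k_1,k_2} k_2 Q(k_1,k_2) \;\geq\; 1.
\end{equation*}

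The main obstacle I anticipate is justifying why the physically relevant root of the fixed-point equation in the supercritical regime is the smallest non-negative solution rather than the trivial root $u^*=1$. I would address this by appealing to the standard monotone-iteration / Banach argument used in configuration-model percolation (for example the treatment in \cite{Newman2002}): iterating $u \mapsto F(1+(u-1)T_1,1+(u-1)T_2)$ starting from $u=0$ converges monotonically to the smallest fixed point, and this limit coincides with the probability that the cluster explored along a link is finite because the tree-like neighbourhood structure of the configuration model (errors vanishing as $O(N^{-1})$) makes the recursion for $\tilde{H}_i(1,1)$ exact in the $N\to\infty$ limit. Once this identification is in place, the two parts of the theorem follow directly from the computations above.
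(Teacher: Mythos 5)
Your proof is correct, and while it lives in the same generating-function percolation framework as the paper, it reaches both conclusions by a noticeably different route. For the outbreak size you evaluate the component generating functions at $(1,1)$, identifying $\tilde{H}_i(1,1)$ with the probability that the component hanging off an occupied type-$i$ link is finite and then reading $\psi$ off as $\tilde{J}(1,1)=\tilde{G}(u^*,u^*)$; the paper instead introduces $u_i$ directly as the probability that a type-$i$ link does not lead to the GCC, writes the self-consistency $u_i = 1-T_i + T_i\sum u_1^{k_1}u_2^{k_2}Q(k_1,k_2)$, and observes $(u_1-1)/T_1=(u_2-1)/T_2$ to reduce to one unknown $u$. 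These are the same computation in two guises (your $u^*$ is the paper's $u$), so this half is essentially equivalent. The genuine divergence is in the threshold: you obtain $\tilde{\nu}\geq 1$ by linearising the fixed-point equation about the trivial root $u=1$ and arguing that a nontrivial root bifurcates there, whereas the paper differentiates $\tilde{J}$ to compute the mean finite-component size, obtaining $\langle s\rangle = 1 + \bigl(\langle\tilde{k}_1\rangle+\langle\tilde{k}_2\rangle\bigr)/\bigl(1-T_1\bar{k}_1-T_2\bar{k}_2\bigr)$ after solving a $2\times 2$ linear system for $\tilde{H}_i'(1,1)$, and reads the transition off the divergence of this quantity. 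Both are standard and yield the same condition. Your route is more economical in that both halves of the theorem fall out of the single fixed-point equation, and it confronts a point the paper glosses over, namely why the physically relevant root in the supercritical regime is the smallest nonnegative one; the paper's route yields the subcritical mean cluster size as a useful by-product. One caution: the first-order expansion alone only shows that $u=1$ becomes unstable when $\tilde{\nu}>1$; existence and uniqueness of the root in $[0,1)$ additionally require the convexity and monotonicity of $u\mapsto F(1+(u-1)T_1,\,1+(u-1)T_2)$ on $[0,1]$ together with its positive value at $u=0$ — you do invoke exactly this via the monotone-iteration argument, so the gap is closed.
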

\begin{proof}
Let $<s_1>$ and $<s_2>$ be the average number of untrained and trained nodes in the component. The expected number of nodes in the component, $<s>$, is given by:
\begin{align*}
<s> &= <s_1> + <s_2>  \\
&= \frac{\partial}{\partial u_1}\tilde{J}(u_1,u_2) \biggr|_{\boldsymbol{u}=1} + \frac{\partial}{\partial u_2}\tilde{J}(u_1,u_2)  \biggr|_{\boldsymbol{u}=1}
\end{align*}
After differentiating and simplifying, $<s_1>$ can be written as:
\begin{align*}
<s_1> = (1-p) + <\tilde{k}_1> \tilde{H}_1^{'}(1,1) + <\tilde{k}_2> \tilde{H}_2^{'}(1,1) 
\end{align*}
where $<\tilde{k}_i>  = \sum \limits_{k_1,k_2}^{\infty}\tilde{k}_i \tilde{P}(\tilde{k}_1,\tilde{k}_2)$  and  
\begin{align*}
\tilde{H}_i^{'}(1,1) = \frac{\partial}{\partial u_1} \tilde{H}_i(u_1,u_2) \biggr |_{\boldsymbol{u} = 1}
\end{align*}
$\tilde{H}_i^{'}(1,1)$ can be obtained by differentiating equation (\ref{eqn:pgfCluster}).
\begin{align*}
\tilde{H}_1^{'}(1,1)&= 1 + T_1 \bar{k}_1\tilde{H}_1^{'}(1,1) + T_2 \bar{k}_2\tilde{H}_2^{'}(1,1) \\
\tilde{H}_2^{'}(1,1)&= T_1 \bar{k}_1\tilde{H}_1^{'}(1,1) + T_2 \bar{k}_2\tilde{H}_2^{'}(1,1)
\end{align*}
where $\bar{k}_i = \sum \limits_{k_1,k_2}^{\infty}k_iQ(k_1,k_2)$. Solving the two simultaneous equations we obtain $\tilde{H}_1^{'}(1,1)= \frac{1-T_2\bar{k}_2}{1 - T_1\bar{k}_1 - T_2\bar{k}_2} $ and $\tilde{H}_2^{'}(1,1)= \frac{T_1\bar{k}_1}{1 - T_1\bar{k}_1 - T_2\bar{k}_2}$. Substituting in the expression for $<s_1>$ we get.
\begin{align*}
<s_1> = (1-p) + \frac{<\tilde{k}_1>(1-T_2\bar{k}_2) + <\tilde{k}_2>T_1\bar{k}_1 }{1 - T_1\bar{k}_1 - T_2\bar{k}_2}
\end{align*}
One can similarly show that:
\begin{align*}
<s_2> =  p + \frac{<\tilde{k}_1>T_2\bar{k}_2 + <\tilde{k}_2>(1-T_1\bar{k}_1) }{1 - T_1\bar{k}_1 - T_2\bar{k}_2}
\end{align*}
Therefore,
\begin{align*}
<s> =  1 + \frac{<\tilde{k}_1> + <\tilde{k}_2>}{1 - \bar{k}_1 - \bar{k}_2}
\end{align*}
Thus, when $\bar{k}_1 + \bar{k}_2 \geq 1$, $<s>$ is no longer finite, it morphs into a giant connected component, or in other words there is a rumor outbreak. 
\par
Assume that a giant connected component of exists ($\tilde{\nu} \geq 1$). For any given node let $u_i$ be the probability that one of its type $i$ links does \emph{not} lead to the giant connecting component. The probability that a randomly chosen node is not a part of the GCC  is given by
\begin{align*}
\psi &= \sum_{\tilde{k}_1,\tilde{k}_2}^{\infty}u_1^{k_1}u_2^{k_2}P(k_1,k_2) \\
\end{align*} 
Now, $u_i$ can be written as $Pr($link is not occupied $)$ + $Pr($link is occupied and the neighbor is not connected to the GCC$)$. Mathematically this can be written as : 
\begin{align*}
u_1 &=  1 - T_1 + T_1\sum_{k_1,k_2}^{\infty}u_1^{k_1}u_2^{k_2}Q(k_1,k_2)  \\
u_2 &=  1 - T_2 + T_2\sum_{k_1,k_2}^{\infty}u_1^{k_1}u_2^{k_2}Q(k_1,k_2)
\end{align*}
Simplifying we obtain, $(u_1 -1)/T_1 = (u_2 -1)/T_2 $. Let $u := (u_1 -1)/T_1 + 1$. Hence, $u_i = 1 + (u-1)T_i$. Note that  $u_i$ is lower bounded by $1-T_i$ and upper bounded by $1$ and hence $0 \leq u \leq 1$. Substituting this in above equations we obtain the desired result:
\begin{align*}
\psi = \sum_{k_1,k_2}^{\infty}(1+(u-1)T_1)^{k_1}(1+(u-1)T_2)^{k_2}P(k_1,k_2)
\end{align*}
where $u$ must satisfy
\begin{align*}
u = \sum_{k_1,k_2}^{\infty}(1+(u-1)T_1)^{k_1}(1+(u-1)T_2)^{k_2}Q(k_1,k_2)
\end{align*}
\end{proof}

The size of the outbreak can now be used for formulating the optimization problem. The size of the outbreak can also be interpreted as the probability that a randomly chosen individual is a believer,  or as the probability of a rumor outbreak \cite{Newman2002}.

 \section{Results \label{sec:results}}
 \begin{figure}
 \centering
  \includegraphics[width = 0.45\textwidth]{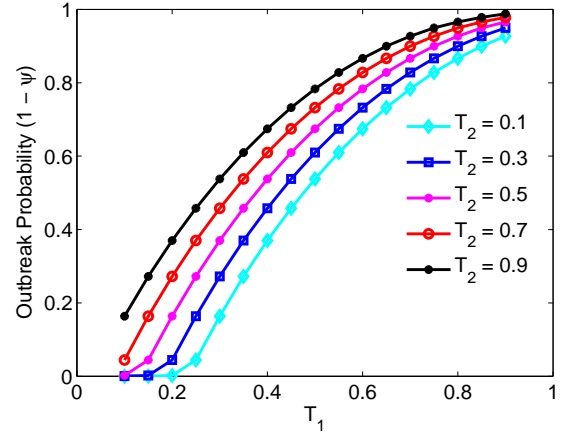}
 \caption{(Color Online) Outbreak probability  $1-\psi$ vs. $T_1$ for various values of $T_2$ }
 \label{fig:OutbreakSizeT2}
 \end{figure}
  Recruiting, training and equipping individuals with resources to distinguish between rumor and benign message is costly.  A high degree individual is more likely to receive a message than a low degree individual. For a trained individual, reception of a message translates into usage of costly resources for making a decision. Hence, we assume that the cost incurred in training an individual is an increasing function of its degree $k$. Increasing the quality of training or equivalently, decreasing $T_2$, results in decrease of the outbreak probability $1-\psi$. This is shown graphically in Fig. \ref{fig:OutbreakSizeT2} and analytically in Lemma \ref{lemma4} detailed in the Appendix. Increasing the quality of training by allocating more resources should result in a higher cost, hence the cost must be a decreasing function of $T_2$.  Let  $c(k,T_2)$ be the cost of training a node with degree $k$. The average cost is given by $\sum\limits_{k=1}^{\infty}c(k,T_2)Pr($node is selected for training $\mid$ node has degree $k)P(k) = \sum\limits_{k=1}^{\infty} c(k,T_2)\phi(k)P(k)$.  The average number of trained individuals is given by $\sum\limits_{k=1}^{\infty}\phi(k)P(k)$.
  
  We formulate two optimization problems, viz., one which minimizes cost while enforcing an upper bound on the outbreak probability, and the other which minimizes the outbreak probability for a given cost budget.  

\subsection{Cost minimization problem}
Providing guarantees on rumor outbreak at a minimum cost is appropriate in scenarios where the rumor spread may result in loss of life and property, such as rumors that incite communal violence. The guarantee on rumor outbreak probability is written as a constraint to the optimization problem. The cost $c(\boldsymbol{\phi},T_2)$ is minimized subject to $1 - \psi \leq \delta$ where $\gamma \ \in \ [0,1]$.  If $\gamma = 0$, the constraint becomes $\tilde{\nu} \leq 1$, as $\gamma = 0$ implies $\psi = 1$ which is the same as $\tilde{\nu} \leq 1$.  This is a non linear optimization problem and the non linearity arises from the outbreak probability expression.    

 For a fixed $T_2$, the constraint $1-\psi \leq \delta$ is non linear in $\boldsymbol{\phi}$. For a fixed $T_2 $, the following theorem allows us to write the non linear constraint as a linear constraint.  The proof follows from Lemmas \ref{lemma1}, \ref{lemma2} and \ref{lemma3} detailed in the Appendix.

\begin{thm} \label{theorem1}
If $T_2 <T_1$, for $\psi \ \in \  [0,1)$, then $\psi$ is strictly increasing with respect to $q$, i.e, $\frac{d\psi}{dq} >0$ for all $q \ \in \  [0,1] $ and $\tilde{\nu}$ is strictly decreasing with respect to $q$, i.e, $\frac{d\tilde{\nu}}{dq} < 0, \ \forall \ \ q \ \in \  [0,1]$, where $q = \frac{1}{<k>}\sum\limits_{k=1}^{\infty}k\phi(k)P(k)$.
\end{thm}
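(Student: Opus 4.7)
My plan is to observe that both $\tilde\nu$ and $\psi$ depend on $q$ only through the ``effective'' bond occupation probability
\[
T(q) := (1-q)T_1 + q T_2 = T_1 - (T_1 - T_2)q,
\]
which is strictly decreasing in $q$ because $T_2 < T_1$. Using $P(k_1,k_2)=\binom{k_1+k_2}{k_2}q^{k_2}(1-q)^{k_1}P(k_1+k_2)$ and the analogous expression for $Q(k_1,k_2)$, the inner binomial sums collapse via $(1-q)a_1+qa_2=1+(u^*-1)T(q)$, yielding $\psi = G_0(1+(u^*-1)T(q))$ and $\tilde\nu = \bar k\,T(q)$, where $G_0(x)=\sum_k P(k)x^k$, $G_1(x)=\sum_k Q(k)x^k$, $\bar k = \sum_k kQ(k)$, and $u^*$ solves $u=G_1(1+(u-1)T(q))$. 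Once monotonicity in $T$ is in hand for each quantity, the chain rule delivers opposite-signed monotonicity in $q$.

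The claim for $\tilde\nu$ is then immediate: $\frac{d\tilde\nu}{dq}=\bar k\,(T_2-T_1)<0$, since $\bar k>0$ for any non-trivial degree distribution. That disposes of half the theorem with essentially no work.

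For $\psi$, introduce $w:=1+(u^*-1)T$, so that $\psi=G_0(w)$ and the fixed-point equation reduces to $w = 1 + T(G_1(w)-1)$. Implicit differentiation yields
\[
\frac{dw}{dT}\;=\;\frac{G_1(w)-1}{1-T\,G_1'(w)}.
\]
In the regime $\psi\in[0,1)$, equivalently $w<1$, the numerator is strictly negative because $G_1$ is strictly increasing on $[0,1]$ with $G_1(1)=1$. So the whole argument hinges on the sign of the denominator, and showing $T\,G_1'(w)<1$ at the physically relevant fixed point is the main obstacle. I would handle this by letting $h(u):=G_1(1+(u-1)T)$ and noting that $h$ is increasing, convex, and maps $[0,1]$ into itself with $h(1)=1$; hence $g(u):=h(u)-u$ is convex with $g(1)=0$. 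The fixed point $u^*\in[0,1)$ that produces the GCC of positive density is the smaller zero of $g$, and convexity together with $g<0$ strictly between the two zeros forces $g'(u^*)<0$, i.e.\ $h'(u^*)=T\,G_1'(w)<1$ strictly. This is the standard percolation/branching-process stability statement, which I expect the three appendix lemmas to establish carefully.

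Combining the two signs gives $\frac{dw}{dT}<0$, and since $G_0'(w)>0$, $\frac{d\psi}{dT}=G_0'(w)\,\frac{dw}{dT}<0$; the chain rule with $\frac{dT}{dq}=T_2-T_1<0$ then gives $\frac{d\psi}{dq}>0$, as required. The overall strategy reduces a two-variable monotonicity question to a one-variable one by exploiting the binomial mixing structure inherent in $P(k_1,k_2)$ and $Q(k_1,k_2)$, after which a short convexity argument on the percolation fixed-point map closes the loop.
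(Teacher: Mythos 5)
Your proof is correct, and it takes a genuinely different and more economical route than the paper. The paper works directly with the two-type sums: it first establishes a combinatorial cancellation identity (Lemma 1), uses it in Lemma 2 to show $\frac{d\tilde{\nu}}{dq}=T_1a_1+T_2a_2$ with $a_1+a_2=0$ and $a_2=\sum_m mQ(m)>0$ (computed via binomial second moments), and in Lemma 3 to get $\frac{\partial f}{\partial q}>0$ from $\beta>\alpha$, followed by a geometric argument (supported by a figure) that the fixed point $u^*$ shifts upward. Your observation that the i.i.d.\ binomial type-assignment collapses everything onto the single effective transmissibility $T(q)=(1-q)T_1+qT_2$ — so that $\tilde{\nu}=\bar k\,T(q)$ and $\psi=G_0\left(1+(u^*-1)T(q)\right)$ with $u^*$ solving a one-type fixed-point equation — bypasses Lemma 1 entirely and makes the $\tilde{\nu}$ half of the theorem a one-line computation (note your $\bar k$ is exactly the paper's $a_2$). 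For the $\psi$ half, your convexity argument that $g'(u^*)<0$ at the smaller zero, hence $T\,G_1'(w)<1$ strictly, is the rigorous version of the paper's somewhat informal ``the line cannot be a tangent'' step, and your single implicit differentiation of $w=1+T(G_1(w)-1)$ replaces the paper's two-term total derivative $\frac{\partial g}{\partial q}+\frac{\partial g}{\partial u}\frac{du^*}{dq}$. What you gain is brevity and a structural insight (the two-type problem is secretly one-type); what the paper's route buys is machinery (Lemma 1 and the explicit $\alpha,\beta$ sums) that it reuses verbatim for Lemma 4, the monotonicity in $T_2$ at fixed $q$, where your reduction to $T(q)$ alone would still apply but requires noting that $T_2\mapsto T(q)$ is decreasing for $q>0$.
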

  Since, $\frac{d\psi}{dq} >0$, the outbreak probability constraint can be written as $\frac{1}{<k>}\sum\limits_{k=1}^{\infty}k\phi(k)P(k) \geq q^*$, where $ \psi(q) \mid_{q=q^*} \ = 1-\gamma$.
\par
The optimization problem is described by:

\begin{align}
   \begin{aligned}
   & \underset{\boldsymbol{\phi},T_2}{\text{minimize}}
    \ \ \ \ \sum_{k=1}^{\infty} c(k,T_2)\phi(k)P(k)  \\
   & \text{subject to:} \ \  \\
   &		  \frac{1}{<k>}\sum_{k=1}^{\infty}k\phi(k)P(k) \geq q^* \\
   &			\sum_{k=1}^{\infty} \phi(k)P(k) \leq B      \\
   &         T_l \leq T_2 \leq T_u \\
   &		\boldsymbol{0} \leq \boldsymbol{\phi} \leq \boldsymbol{1}  \label{eqn:LinOpt1}
     \end{aligned}
\end{align}
$B$ is the upper bound on the average number of individuals that can be trained, $B \ \in \ [0,1]$. $T_l > 0$ is the upper bound on the quality of training and $T_u < T_1$ is the lower bound. The above problem is non linear in $T_2$ as $q^*$ is a non linear function of $T_2$, however, for a fixed $T_2$ it is a linear program. We convert the problem to a set of linear programs by discretizing $T_2$ and formulating a linear program for each value of $T_2$. We obtain an approximate global minimum by comparing the minimas obtained by solving the set of linear programs. 
\par
The optimization problem described above may not be feasible for all values of $T_1$ and  for all possible degree distributions $P(k)$. Here is an example of a scenario where the constraint set is empty. Assume $B=1$, the problem  becomes infeasible if $1 - \psi \geq \gamma$ when $T_2$ is at the minimum possible value ($T_l$) and $q=1$ , i.e., all individuals are trained. However, if the constraint on the quality of training is relaxed, $T_l = 0$, then an optimal feasible solution will always exists as $T_2$ can be pushed arbitrarily close to $0$ to ensure that the outbreak probability constraint is not violated.
\par
\subsubsection{Linear cost}
For the sake of illustration we study the case where the cost is directly proportional to the degree and inversely proportional to $T2$, i.e., $c(k,T2) = \frac{k}{T_2}$. Assuming feasibility of the problem, we use a numerical linear programming solver to arrive at the solution. Since many social networks are scale free \cite{Java2007,Caldarelli2007}, we assume that the network degree distribution is a power law $P(k) \propto k^{-\alpha}$ with $\alpha = 2.5$ and a population size of $2000$. We use the algorithm detailed in \cite{Catanzaro2005} for generating the configuration model while the power law distribution was generated using the code supplied in \cite{Clauset2009}.

\begin{figure}
\centering
 \includegraphics[width = 0.45\textwidth]{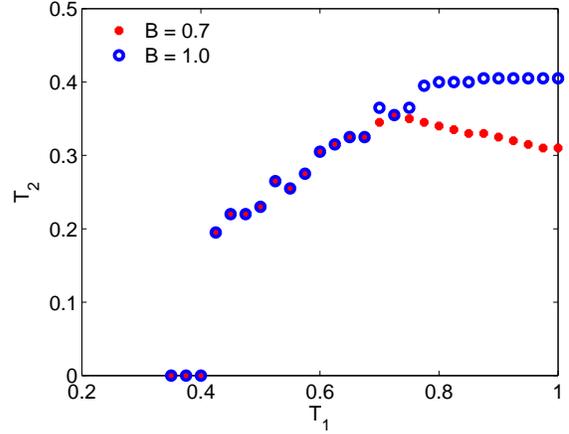}
\caption{(Color Online) Optimum value of $T_2$  vs. $T_1$. Parameters: $\gamma = 0.1 \ T_l  = 0, \ T_u = T_1$ }
\label{fig:OptT2}
\end{figure}

\begin{figure}
\centering
 \includegraphics[width = 0.45\textwidth]{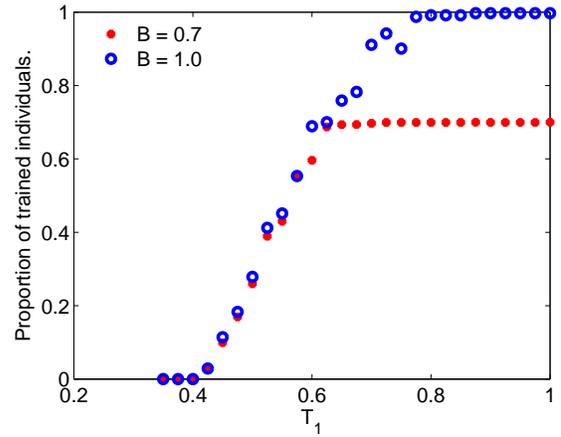}
\caption{(Color Online)  Proportion of Trained Individuals vs. $T_1$. Parameters: $\gamma = 0.1, \ T_l  = 0, \ T_u = T_1$}
\label{fig:OptNumTrained}
\end{figure}
The optimum value of $T_2$ for varying force of rumor, $T_1$, is shown in Fig. \ref{fig:OptT2}, while Fig. \ref{fig:OptNumTrained} shows the optimum proportion of trained individuals for varying $T_1$. For the $B=1$ scenario, as the force of rumor, $T_1$, increases, the optimum $T_2$ rises and saturates. Notice that when $T_2$ saturates the proportion of trained nodes hits $1$, i.e., all nodes are trained. Since all the nodes are trained, increase in $T_1$ does not have any effect on $T_2$. This behavior is observed because the cost of increasing the quality (reducing $T_2$) is higher than the cost of training individuals. 
\par
When $B=0.7$, the proportion of trained nodes cannot exceed $0.7$, and hence after the proportion of individuals hit $0.7$, increasing the quality of training (reducing $T_2$) is the only available choice to combat the increase in $T_1$.  When $T_1$ is small ($\leq 0.4$), $\boldsymbol{\phi} = \boldsymbol{0}$, thus the cost is $0$, and hence $T_2$ can take any arbitrary value.

\begin{figure}
\centering
 \includegraphics[width = 0.45\textwidth]{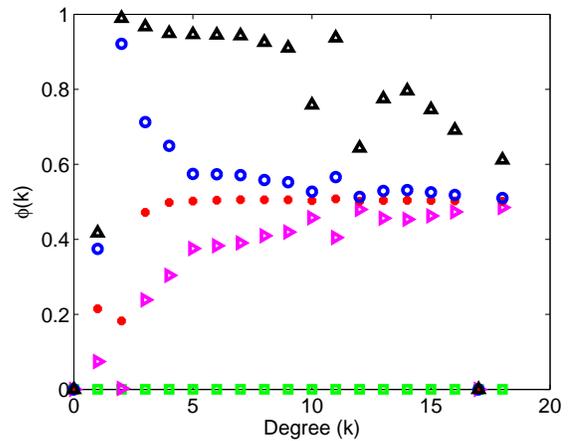}
\caption{(Color Online)  $\boldsymbol{\phi}$ for various values of $T_1$. Parameters :  $\gamma = 0.1, \ B = 0.7 \ T_l = 0, \ T_u = T_1$. Green squares : $T_1 = 0.4$, magenta triangles : $T_1 = 0.45$, red stars : $T_1 = 0.5$, blue circles : $T_1 = 0.6$, black hats :    $T_1= 0.7$.  }
\label{fig:CompareRumT2}
\end{figure}
\par
Fig. \ref{fig:CompareRumT2} shows $\boldsymbol{\phi}$ for various values of $T_1$. As $T_1$ increases the proportion of trained low degree nodes increases much faster than the proportion of trained high degree nodes. A clear pattern is seen, the proportion of high degree nodes that need to be trained is more or less constant with respect to $T_1$, while the trained low degree nodes are extremely sensitive to $T_1$. Thus, one can formulate a simple policy of training a fixed proportion (say $40-60 \%$) of high degree nodes, and recruit low degree nodes depending one the estimated severity of the rumor $T_1$. In other words, after fixing the proportion of high degree individuals that are trained, if the social planners perceive that a particular rumor message can be easily identified by individuals then they need not train a whole lot of low degree nodes. Thus, for a cost function linear in $k$, low degree nodes are more important than high degree nodes. 

\subsubsection{Comparison of linear,sub-linear and super-linear cost models}
To understand the effect of cost structure on the optimum set of trained individuals, we study three cost structures, viz., a cost function sub linear in $k$, $c(k,T_2) = \frac{\sqrt{k}}{T_2}$, linear in $k$ $c(k,T_2) = \frac{k}{T_2}$ and super linear $c(k,T_2) = \frac{k^{1.5}}{T_2}$.   
%\begin{figure}
%\centering
 %\includegraphics[width = 0.45\textwidth]{CompareT2prob1.eps}
%\caption{Optimum value of $T_2$  vs. $T_1$ for various cost functions. Parameters: $\gamma = 0.1 \ T_l  = 0, \ T_u = T_1, \ B = 0.7$ }
%\label{fig:CompareT2Prob1}
%\end{figure}

\begin{figure}
\centering
 \includegraphics[width = 0.45\textwidth]{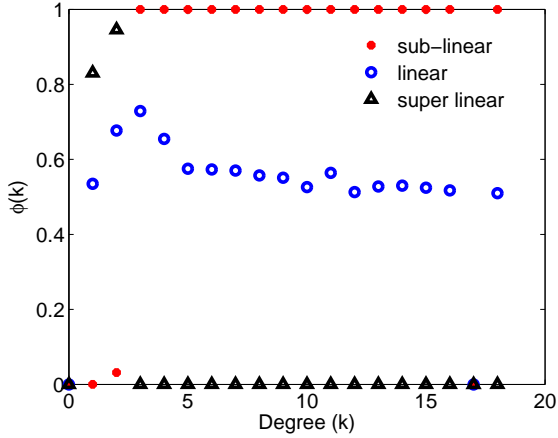}
\caption{(Color Online) $\boldsymbol{\phi}$ vs $T_1$ for various cost functions. Parameters: $\gamma = 0.1, \ T_1 = 0.7,\ T_l  = 0, \ T_u = T_1 \ B = 0.7$}
\label{fig:ComparePhiprob1}
\end{figure}

\begin{figure}
\centering
 \includegraphics[width = 0.45\textwidth]{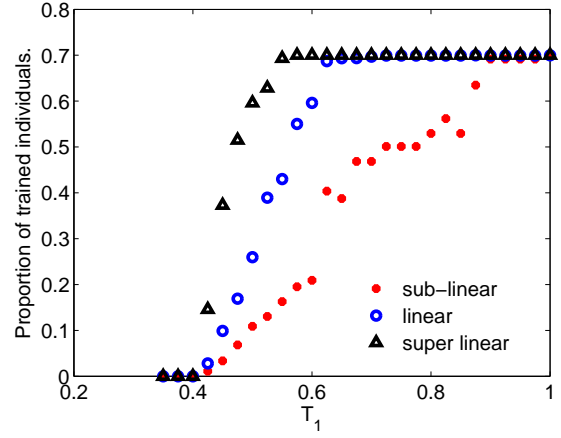}
\caption{(Color Online) Proportion of Trained Individuals vs. $T_1$. Parameters: $\gamma = 0.1, \ T_l  = 0, \ T_u = T_1, \ B = 0.7$}
\label{fig:CompareNumTrainedProb1}
\end{figure}
\par
The profile of optimum trained individuals for linear, sub linear and super linear cost functions is illustrated in Fig. \ref{fig:ComparePhiprob1}. In sub linear case only the high degree nodes are trained while in the super linear case only the low degree nodes are trained, while the linear case hits a middle ground between the sub linear and super linear cases.
\par
The reduction of outbreak probability obtained by training a high degree node is much more than that obtained by training a low degree node.  In the super linear case  since there is a significant cost on training high degree nodes, large number of low degree nodes are trained. In the sub linear case we observe the exact opposite behavior.  A scale free network has a large number of low degree nodes compared to high degree nodes, hence the number of trained individuals in the super linear case must be much higher than the sub linear case, this is can be seen in Fig. \ref{fig:CompareNumTrainedProb1}. 

\subsection{Outbreak probability minimization problem}
We now look at the problem of minimizing the outbreak probability in a resource constrained scenario. More, specifically we study a scenario where the cost budget is finite. Thus the outbreak probability $1-\psi$ must be minimized subject to a cost constraint. Like the previous optimization problem this is also a non linear program. Since $\frac{d\psi}{dq} > 0$ for a fixed $T_2$, maximizing $q$ is equivalent to minimizing $1-\psi$. Thus for a fixed $T_2$ the problem is equivalent to a linear program. We discretize $T_2$ and compute the optimum $q, \boldsymbol{\phi}$ for each $T_2$ by solving the linear program. Thus,  a set of outbreak probabilities is obtained for each optimum $q$ and the corresponding $T_2$. The approximate global minimum, optimum $\boldsymbol{\phi}$ and $T_2$ is found by obtaining the minimum outbreak probability from the set and the corresponding $q, \boldsymbol{\phi}$ and $T_2$. The optimization problem for a fixed $T_2$ is given by: 
\begin{align}
   \begin{aligned}
   & \underset{\boldsymbol{\phi}}{\text{maximize}}
    \ \ \ \ \sum_{k=1}^{\infty}k\phi(k)P(k) \\
   & \text{subject to:} \ \  \\
   &		  \sum_{k=1}^{\infty} c(k,T_2)\phi(k)P(k) \leq C \\
   &          \sum_{k=1}^{\infty} \phi(k)P(k) \leq B \\
   &		\boldsymbol{0} \leq \boldsymbol{\phi} \leq \boldsymbol{1}  \label{eqn:linOpt2}
     \end{aligned}
\end{align}

\subsubsection{Linear cost}
Like the previous problem we study the case where the cost is a linear function of $k$ and is inversely proportional to $T_2$.  We assume that the network degree distribution is a power law $P(k) \propto k^{-\alpha}$ with $\alpha = 2.5$ and a total population size of $2000$.
  \begin{figure}
  \centering
   \includegraphics[width = 0.45\textwidth]{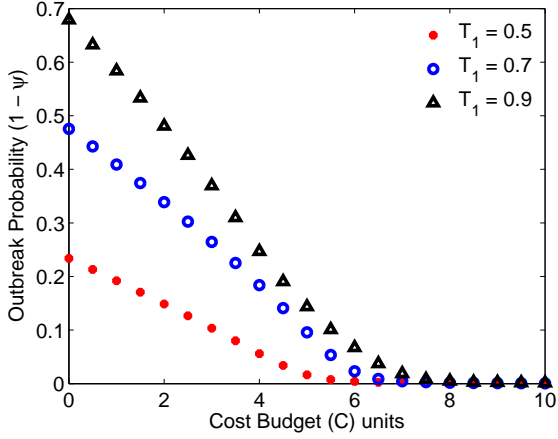}
  \caption{(Color Online) Outbreak Probability  vs. Cost Budget $C$ for various $T_1$. Parameters: $B = 0.7, \ T_l = 0,\  T_u = T_1$ }
  \label{fig:OptOutbreakBudget}
  \end{figure}
  
  \begin{figure}
  \centering
   \includegraphics[width = 0.45\textwidth]{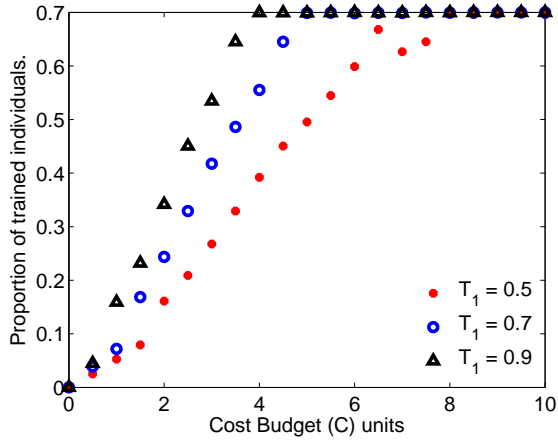}
  \caption{(Color Online) Proportion of Trained Individuals vs. Cost Budget $C$ for various $T_1$.  Parameters: $B = 0.7, \ T_l = 0,\  T_u = T_1$}
  \label{fig:OptNumTrainedBudget}
  \end{figure}
  
    \begin{figure}
    \centering
     \includegraphics[width = 0.45\textwidth]{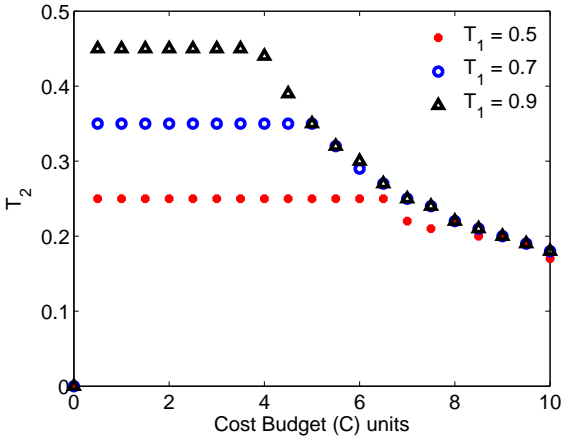}
    \caption{(Color Online) $T_2$  vs. Cost Budget $C$ for various $T_1$. Parameters: $B = 0.7, \ T_l = 0,\  T_u = T_1$}
    \label{fig:OptT2Budget}
    \end{figure}
  
  \begin{figure}
  \centering
   \includegraphics[width = 0.45\textwidth]{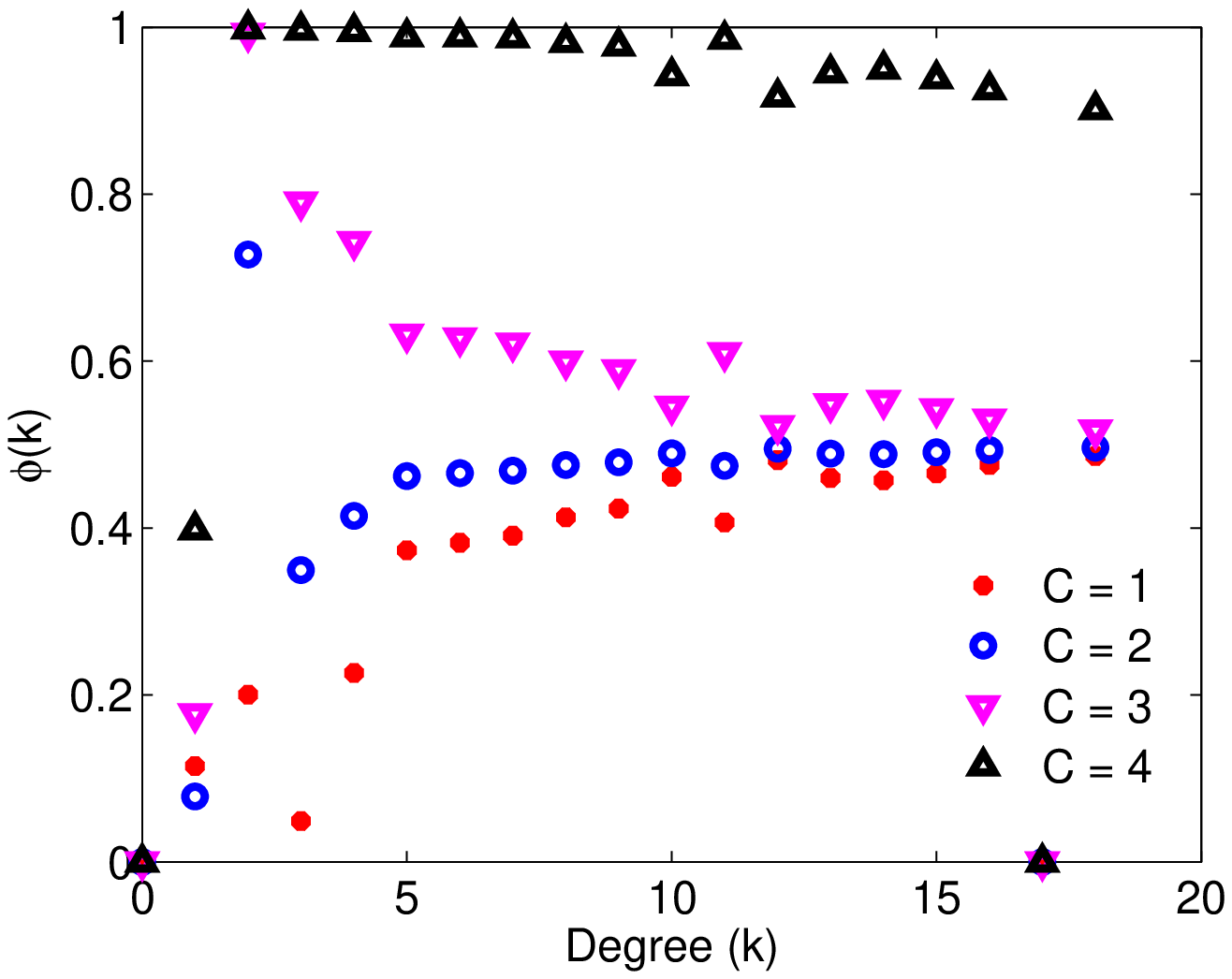}
  \caption{(Color Online) $\boldsymbol{\phi}$ for various values of cost budget $C$. Parameters :  $ B = 0.7, \ T_l = 0,\  T_u = T_1$.   }
  \label{fig:PhiBudget}
  \end{figure}
  Fig. \ref{fig:OptOutbreakBudget} illustrates the reduction in the outbreak probability with increase in cost budget $C$ for various values of $T_1$. Fig. \ref{fig:OptNumTrainedBudget} shows the expected number of trained individuals for a varying cost budget while Fig. \ref{fig:OptT2Budget} displays the corresponding optimum values of $T_2$.  When the total number of trained individuals hit the limit $B$, increasing the cost budget leads to a decrease in $T_2$. Thus, given an increasing budget, we can conclude that it is better to increase the quantity of trained individuals rather than the quality of training, and only when increasing the quantity is no longer possible, the quality should be increased. 
  \par
   In Fig. \ref{fig:PhiBudget} we plot $\boldsymbol{\phi}$ for various values of $C$. Similar to the previous optimization problem, low degree nodes are seen to be more sensitive to changes in $C$ than high degree nodes.
   
   \subsubsection{Comparison of linear,sub-linear and super-linear cost models}
   Like the previous problem, we study the effect of sub linear and super linear on the solution of the optimization problem. 
     \begin{figure}
     \centering
      \includegraphics[width = 0.45\textwidth]{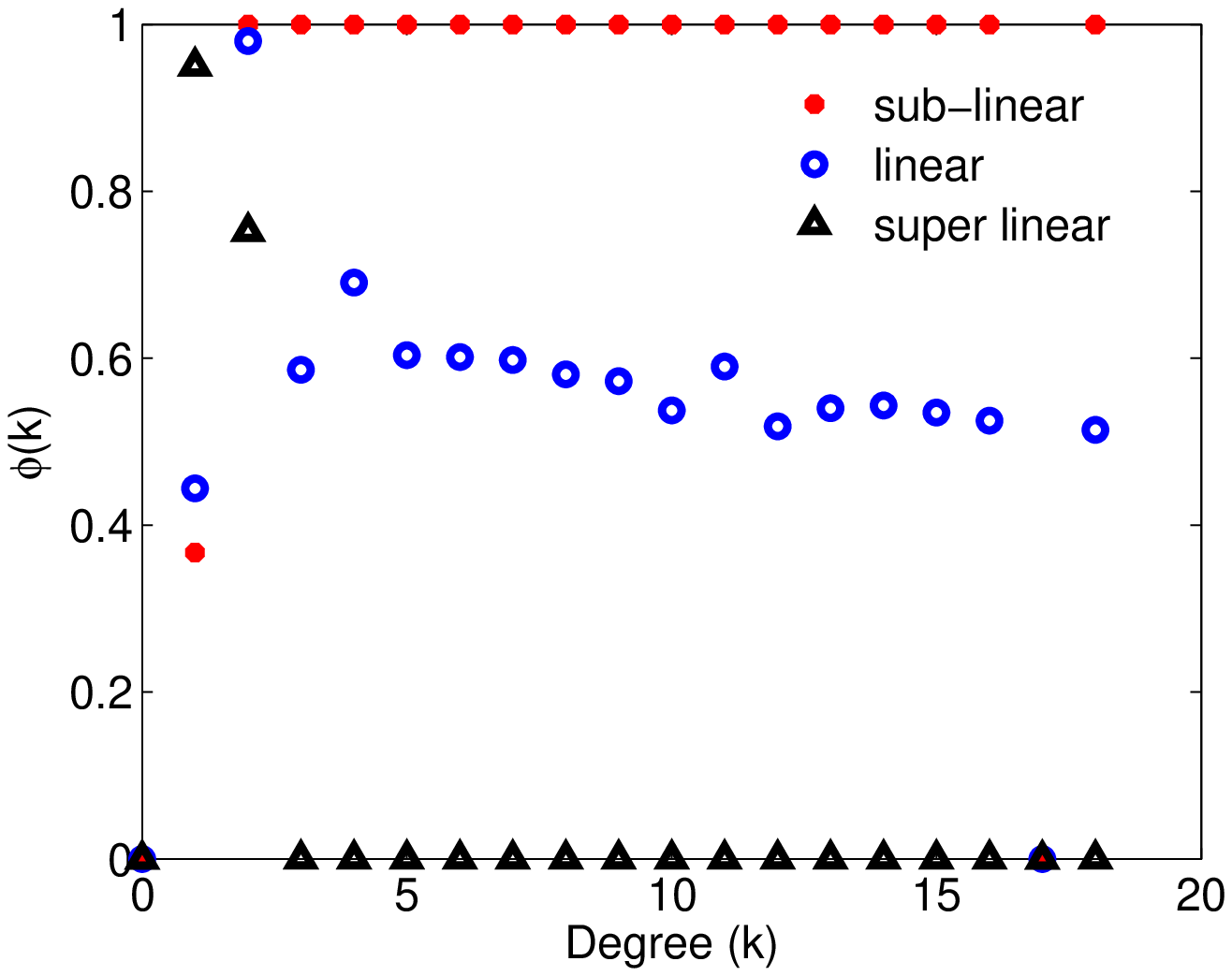}
     \caption{(Color Online) $\boldsymbol{\phi}$ for various values of cost budget $C$. Parameters :  $ B = 0.7, \ T_l = 0,\  T_u = T_1$.   }
     \label{fig:ComparePhiprob2}
     \end{figure}
    
      \begin{figure}
      \centering
      \includegraphics[width = 0.45\textwidth]{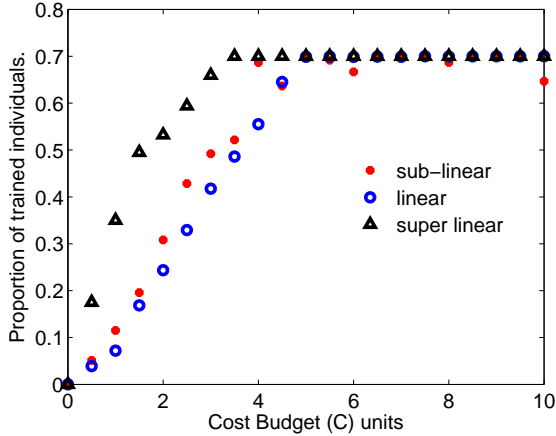}
      \caption{(Color Online) Proportion of Trained Individuals vs. $T_1$. Parameters: $B = 0.7, \ T_1 = 0.6, \ T_l  = 0, \ T_u = T_1$}
      \label{fig:CompareNumTrainedprob2}
      \end{figure}
      \begin{figure}
      \centering
       \includegraphics[width = 0.45\textwidth]{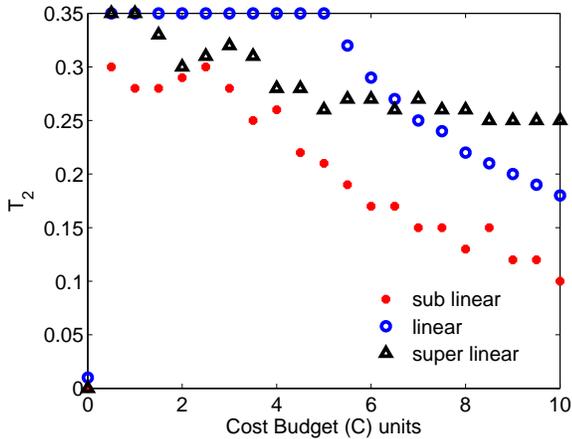}
      \caption{(Color Online) Optimum value of $T_2$  vs. $T_1$ for various cost functions. Parameters: $\ B = 0.7,  \ T_1 = 0.6, \ T_l  = 0, \ T_u = T_1$ }
      \label{fig:CompareT2prob2}
      \end{figure}
   Fig. \ref{fig:ComparePhiprob2} which plots the profile of trained individuals for varying cost budget $C$ is very similar to Fig. \ref{fig:ComparePhiprob1} in the cost minimization problem subsection, i.e., in the super linear case low degree nodes are trained while in the sub linear case high degree nodes are trained. However, the optimum $T_2$ for the super linear case displays an interesting behavior. As shown in Fig. \ref{fig:CompareNumTrainedprob2}, the optimization problem involving super linear costs hits the limit on the number of trained individuals ($B=0.7$) much before the optimization problem with sub linear costs. One would expect that, if the cost budget is increased beyond this point $T_2$ should decrease. However, Fig. \ref{fig:CompareT2prob2} shows that $T_2$ decreases very slowly for the super linear case in comparison to the sub linear case. The reason for this behavior is not entirely clear.

\subsection{A note on the information spreading problem}
 The analysis and results discussed in this article can also be applied to address the problem of spreading information in a social network. Social planners, health campaigners, or political parties may want to ensure the dissemination of information in the social network at minimal cost. Incentivizing individuals for spreading information can help in disseminating the message to a large number of people. Redefine $T_1$ as the probability an ordinary individual shares the message, and $T_2$ as the probability that an recruited individual shares the message ($T_2 > T_1$). Let $\phi(k)$ be the proportion of recruited individuals with degree $k$, and $c(k,T2)$ be the  recruitment cost. 
 \par
 An optimization problem for minimizing the cost guaranteeing the size of the information outbreak $1-\psi \geq \gamma$, or maximizing the outbreak size $1-\psi$ for a given cost budget can be formulated. Theorem \ref{theorem1} can be easily extended to include the scenario $T_2 > T_1$. In this case we would obtain $\frac{d\psi}{dq} < 0$ and $\frac{d\tilde{\nu}}{dq} > 0$. This would enable one to solve the information spreading problem by solving a set of linear programs.
        
\section{Conclusion and Future Work \label{sec:conclusion}} 
In this article we studied the problem of containing  rumors in a social network.  More specifically, we considered a scenario where individuals are unable to distinguish between the rumor and the information message, and proposed a training mechanism to recruit and train individuals. By using network percolation theory we calculated the size of the rumor outbreak and the conditions for the occurrence of such outbreaks. We then formulated an optimization problem for minimizing the expected recruitment and training cost while ensuring prevention of rumors. The optimization problem turned out to be nonlinear, and we showed that for a fixed quality of training, $T_2$, the problem becomes a linear programming problem. This enabled us to  solve the general problem  by solving a set of linear programs. As an illustration we studied the solutions of cost functions which are sub-linear, linear and super-linear in node degree $k$. For the linear case, the solution exhibited interesting properties, such as the sensitivity of low degree nodes to the intensity of rumor and the cost budget  and the lack of sensitivity of high degree nodes to the same. In the sub linear case it is best to train low degree nodes while for the super linear it is exactly the opposite: the high degree nodes. However, in both sub and super linear case the training profile was found to be largely deterministic. In all the cases the solutions exhibited  patterns which can be easily converted to an implementable heuristic. Additionally, our results can be easily extended to address the information spreading problem.
\par
More importantly, our results have implications on rumor control policies. Many Governments are concerned about dangerous rumor outbreaks and misinformation epidemics propagated on OSNs, and some, like the Indian Government and the Chinese Government have drafted policies \cite{Franke2012,chineserumor} for controlling them. However, these policies are drafted in an ad hoc manner. Furthermore they are not well studied and they do not work \cite{Franke2012}. In contrast, our approach gives guarantees on the outbreak size, uses the minimal possible resources and can be implemented in the form of a heuristic. 
\par
In this article we have assumed that only the rumor, modeled as a single message, is circulated in the social network. There may be situations where messages, other than the primary rumor message circulate in the social network. These messages may interact with the primary rumor message either helping or hindering its spread. Analysis of a model which incorporates such interactions may reveal novel approaches  for combating rumor outbreaks. We leave this interesting problem to the future.

\appendix
Theorem \ref{theorem1} is established using Lemma  \ref{lemma2} and \ref{lemma3}, while Lemma \ref{lemma1} is used in the proof of Lemma \ref{lemma2} and \ref{lemma3}.
\par
 \begin{lem} \label{lemma1}
 For all $a ,\ b \ \in \ [0,1] $ and $ k_1 + k_2 \leq n, \ n \ \in \ \boldsymbol{Z}^+$ and any arbitrary $f:\boldsymbol{Z}\rightarrow\boldsymbol{R}$ the following is true:
 
 \begin{align*}
 &\sum_{k_1=0}^{n}\sum_{k_2=0}^{n-k_1}f(k_1+k_2)k_2 {k_1+ k_2 \choose k_2}a^{k_2-1}b^{k_1} \\ -& \sum_{k_1=0}^{n}\sum_{k_2=0}^{n-k_1}f(k_1+k_2)k_1 {k_1 + k_2 \choose k_2}a^{k_2}b^{k_1-1} = 0
 \end{align*}
 
 \end{lem}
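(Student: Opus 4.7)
The plan is to re-index the double sums by the total $k := k_1 + k_2$ and recognize each inner sum as a partial derivative of $(a+b)^k$. Concretely, I would first swap the order of summation in both terms: write $\sum_{k_1 + k_2 \le n}$ as $\sum_{k=0}^{n}\sum_{k_2=0}^{k}$ (with $k_1 = k - k_2$) in the first sum, and as $\sum_{k=0}^{n}\sum_{k_1=0}^{k}$ in the second. Since the factor $f(k_1+k_2) = f(k)$ depends only on the total, it pulls out of the inner sum, so it suffices to show that for every $k \in \{0,1,\dots,n\}$,
\begin{equation*}
\sum_{k_2=0}^{k} k_2 \binom{k}{k_2} a^{k_2-1} b^{k-k_2}
\;=\;
\sum_{k_1=0}^{k} k_1 \binom{k}{k_1} a^{k-k_1} b^{k_1-1}.
\end{equation*}

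Next I would identify both sides with partial derivatives of the binomial expansion. Starting from $(a+b)^k = \sum_{j=0}^{k}\binom{k}{j} a^{j} b^{k-j}$, differentiating in $a$ gives $k(a+b)^{k-1} = \sum_{j=0}^{k} j\binom{k}{j} a^{j-1} b^{k-j}$, which is exactly the left-hand side (with $j=k_2$). Differentiating in $b$ instead gives $k(a+b)^{k-1} = \sum_{j=0}^{k} j\binom{k}{j} a^{k-j} b^{j-1}$, which is the right-hand side (with $j=k_1$). Hence both inner sums equal $k(a+b)^{k-1}$ and the per-$k$ difference vanishes; summing over $k$ yields the claim.

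The only subtlety worth addressing briefly is the boundary term $k_2 = 0$ (resp.\ $k_1 = 0$), where the expressions $a^{-1}$ or $b^{-1}$ formally appear. These terms are harmless because the prefactor $k_2$ (resp.\ $k_1$) is zero, so by the standard convention $0 \cdot a^{-1} := 0$ the summands vanish; equivalently one may simply start the inner sums at $j=1$ without changing anything. Apart from this bookkeeping, there is no real obstacle: once the re-indexing by $k$ is performed, recognizing the inner sums as $\partial_a (a+b)^k$ and $\partial_b (a+b)^k$ makes the identity immediate, and the arbitrariness of $f$ is automatic since the cancellation is term-by-term in $k$.
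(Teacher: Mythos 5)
Your proof is correct, but it takes a genuinely different route from the paper. You re-index both double sums by the total $k=k_1+k_2$, pull out $f(k)$, and recognize the two inner sums as $\partial_a(a+b)^k$ and $\partial_b(a+b)^k$ respectively, so each equals $k(a+b)^{k-1}$ and the difference vanishes term-by-term in $k$; your handling of the $k_2=0$ (resp.\ $k_1=0$) boundary terms via the vanishing prefactor is also the right bookkeeping. The paper instead first symmetrizes by swapping the roles of $k_1$ and $k_2$ in the second sum, rewrites the whole expression as $\sum g(k_1,k_2)$ with $g(k_1,k_2)=f(k_1+k_2)\,k_2\binom{k_1+k_2}{k_2}\bigl(a^{k_2-1}b^{k_1}-a^{k_1}b^{k_2-1}\bigr)$, and then cancels the nonzero terms in pairs $(k_1,k_2)\leftrightarrow(k_2-1,k_1+1)$ after a counting argument that the surviving terms are even in number. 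Your derivative-of-the-binomial-theorem argument is shorter, avoids the pairing and parity bookkeeping entirely, and gives the stronger conclusion that each sum separately equals $\sum_{k=0}^{n}f(k)\,k\,(a+b)^{k-1}$ --- a closed form the paper's cancellation argument does not produce; the paper's pairing argument, on the other hand, is purely combinatorial and does not rely on recognizing a generating-function identity. Either proof suffices for the uses of the lemma later in the paper.
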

 \begin{proof}
\par
 We can change switch the indices in the second term, i.e.,
 \begin{align*}
 &\sum_{k_1=0}^{n}\sum_{k_2=0}^{n-k_1}f(k_1+k_2)k_1 {k_1 + k_2 \choose k_2}a^{k_2}b^{k_1-1}  \\
 &=\sum_{k_1=0}^{n}\sum_{k_2=0}^{n-k_1}f(k_1+k_2)k_2 {k_1 + k_2 \choose k_2}a^{k_1}b^{k_2-1}
 \end{align*}
 Hence,
  \begin{align}
  LHS = &\sum_{k_1=0}^{n}\sum_{k_2=0}^{n-k_1}  f(k_1+k_2)k_2 {k_1+ k_2 \choose k_2}\bigg{(}a^{k_2-1}b^{k_1} \nonumber \\&- a^{k_1}b^{k_2-1} \bigg{)} \nonumber \\ 
  & = \sum_{k_1=0}^{n}\sum_{k_2=0}^{n-k_1} g(k_1,k_2) \label{eqn:lemma1} 
  \end{align}
 We now count the number of terms in the above equation and show that they are even.  An expression indexed by a specific $k_1$ and $k_2$ denotes a term, e.g, $g(1,1)$ is a term. The total number of terms in the summation $= \sum\limits_{i=1}^{n+1}i = \frac{(n+1)(n+2)}{2}$. Out of those,  $n+1$ terms are $0$ due to the $k_2$ multiplier ($k_2 = 0$ for $k_1 = 0 \to n$). Additionally, when $k_2 = k_1+1$  equation (\ref{eqn:lemma1}) is zero. The total number of terms when $k_2 = k_1 +1 $ is given by $\floor*{\frac{n+1}{2}}$.
 \par
 Since, these terms are zero, subtracting out these terms from the total number of terms results in 
 \begin{align*}
 &\frac{(n+1)(n+2)}{2} - (n+1) - \floor*{\frac{n+1}{2}} \\
 &= \frac{n^2}{2} \ \text{for }n \text{ even}\\
 & = \frac{(n-1)(n+1)}{2} \ \text{for }n \text{ odd}
 \end{align*}
 
Thus, the remaining terms are even for both $n$ odd and even. This allows us to pair the terms.  Consider one such pairing: the term with indices $k_1, \ k_2$ are paired with a term with indices $\hat{k}_1, \ \hat{k}_2$ where $\hat{k}_2 = k_1 + 1$ and $\hat{k}_1 = k_2 -1$.  If we sum these two terms we obtain
 \par
{\footnotesize
 \begin{align*}
 & g(k_1,k_2) + g(\hat{k}_1,\hat{k}_2) \\
 & = f(k_1+k_2)a^{k_2-1}b^{k_1}\left(  \frac{k_2(k_1+k_2)!}{k_1!k_2!} - \frac{k_2(k_1+k_2)!}{k_1!k_2!}  \right)\\&+  f(k_1+k_2)a^{k_1}b^{k_2-1}\left( \frac{(k_1+1)(k_1+k_2)!}{(k_2-1)!(k_1+1)!} - \frac{(k_1+1)(k_1+k_2)!}{(k_2-1)!(k_1+1)!} \right)  \\
 &=0
 \end{align*}
 }
 \par
 Thus, the summation of the remaining terms is zero, which completes the proof. 
  \end{proof}
  
  \begin{lem}\label{lemma2}
  If $T_2 <T_1$ then $\tilde{\nu}$ is strictly decreasing with respect to $q$, i.e, $\frac{d\tilde{\nu}}{dq} < 0, \ \forall \ q \ \in \ [0,1]$.
  \end{lem}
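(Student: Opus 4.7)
The plan is to collapse the double sum defining $\tilde{\nu}$ into an affine function of $q$, after which the sign of the derivative falls out directly from the hypothesis $T_2 < T_1$.

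First I would substitute the explicit expression $Q(k_1,k_2) = \binom{k_1+k_2}{k_2} q^{k_2}(1-q)^{k_1} Q(k_1+k_2)$ into
\[
\tilde{\nu} = \sum_{k_1,k_2}\bigl(T_1 k_1 + T_2 k_2\bigr) Q(k_1,k_2),
\]
and reindex by $k = k_1 + k_2$, so that $k_2$ runs from $0$ to $k$ while $k_1 = k - k_2$:
\[
\tilde{\nu}(q) = \sum_{k=0}^{\infty} Q(k) \sum_{k_2=0}^{k} \bigl(T_1(k-k_2) + T_2 k_2\bigr)\binom{k}{k_2} q^{k_2}(1-q)^{k-k_2}.
\]
The inner sum is the expectation of the linear statistic $T_1(k-X) + T_2 X$ for $X \sim \mathrm{Bin}(k,q)$, hence equals $k\bigl(T_1(1-q) + T_2 q\bigr)$. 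Writing $\langle k \rangle_Q := \sum_{k} k Q(k)$, one therefore obtains the closed form
\[
\tilde{\nu}(q) = \bigl(T_1 + (T_2 - T_1) q\bigr)\,\langle k \rangle_Q.
\]

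Differentiating yields $d\tilde{\nu}/dq = (T_2 - T_1)\,\langle k \rangle_Q$, which is strictly negative for all $q \in [0,1]$ because $T_2 < T_1$ by hypothesis and $\langle k \rangle_Q > 0$ in the relevant regime (equivalently, $\langle k^2 \rangle > \langle k \rangle$, which holds whenever $P$ places nonzero mass on degrees $\geq 2$, i.e.\ whenever percolation is non-trivial).

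There is essentially no obstacle on this route beyond clean bookkeeping of the reindexing and the binomial-mean identity. The combinatorial identity in Lemma~\ref{lemma1} suggests an alternative derivation: differentiate $Q(k_1,k_2)$ term-by-term in $q$, and split $T_1 k_1 + T_2 k_2 = T_1(k_1 + k_2) + (T_2 - T_1)k_2$. The first piece depends only on $k_1+k_2$, so it matches the hypothesis of Lemma~\ref{lemma1} with $f(k) = kQ(k)$ and vanishes, while the residual $(T_2 - T_1)k_2$ piece, after further manipulation, produces the same negative quantity. Either path leads to the same conclusion, but I would favour the reindexing/binomial-expectation route for its transparency.
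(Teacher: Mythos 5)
Your proof is correct, and it takes a genuinely different and more economical route than the paper's. The paper differentiates $\tilde{\nu}$ in $q$ first, splits the derivative into $T_1a_1 + T_2a_2$, invokes the combinatorial antisymmetry identity of Lemma~\ref{lemma1} to get $a_1 + a_2 = 0$, and then evaluates $a_2 = \sum_m mQ(m) > 0$ via second moments of a binomial, concluding $\frac{d\tilde{\nu}}{dq} = (T_2 - T_1)\sum_m mQ(m)$. You instead collapse the double sum \emph{before} differentiating: conditioning on $k = k_1+k_2$, the inner sum is the mean of the linear statistic $T_1(k-X)+T_2X$ with $X\sim\mathrm{Bin}(k,q)$, giving the closed form $\tilde{\nu}(q) = \bigl(T_1+(T_2-T_1)q\bigr)\langle k\rangle_Q$. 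This needs only the binomial first moment, reveals that $\tilde{\nu}$ is affine in $q$ (strictly more information than the sign of its derivative), and lands on exactly the same expression for $\frac{d\tilde{\nu}}{dq}$. What the paper's heavier machinery buys is reuse: Lemma~\ref{lemma1} is needed again in Lemma~\ref{lemma3}, where the weights $\alpha^{k_1}\beta^{k_2}$ prevent the sum from collapsing to a closed form, so the differentiate-then-cancel strategy is unavoidable there; for $\tilde{\nu}$ itself it is overkill. One small point common to both arguments: strict negativity requires $\langle k\rangle_Q = (\langle k^2\rangle - \langle k\rangle)/\langle k\rangle > 0$, i.e.\ some mass on degrees at least $2$ --- you state this explicitly, whereas the paper's claim $a_2>0$ carries the same implicit non-degeneracy assumption.
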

  \begin{proof}
\par
  $\frac{d}{dq}\tilde{\nu} =$ 
  \par
  {\small
  \begin{align*}
  & T_1\sum_{k_1,k_2}^{\infty}k_1Q(k_1+k_2){k_1+k_2 \choose k_2}\left(k_2q^{k_2-1}r^{k_1} - k_1q^{k_2}r^{k_1-1}\right) \\
  &+ T_2\sum_{k_1,k_2}^{\infty}k_2Q(k_1+k_2){k_1+k_2 \choose k_2}\left(k_2q^{k_2-1}r^{k_1} - k_1q^{k_2}r^{k_1-1}\right)
  \end{align*}
  }
  \par
  where $r = 1-q$. Let,
  \par
  {\small
  \begin{align*}
  a_1 &= \sum_{k_1,k_2}^{\infty}k_1Q(k_1+k_2){k_1+k_2 \choose k_2}\left(k_2q^{k_2-1}r^{k_1} - k_1q^{k_2}r^{k_1-1}\right) \\ 
  a_2 &= \sum_{k_1,k_2}^{\infty}k_2Q(k_1+k_2){k_1+k_2 \choose k_2}\left(k_2q^{k_2-1}r^{k_1} - k_1q^{k_2}r^{k_1-1}\right)
  \end{align*}
  }
  \par
  Adding $a_1$ and $a_2$ we get
  \par
  {\small
  \begin{align*}
  & a_1 + a_2 = \\ & \sum_{k_1,k_2}^{\infty}(k_1+k_2)Q(k_1+k_2){k_1+k_2 \choose k_2}\left(k_2q^{k_2-1}r^{k_1} - k_1q^{k_2}r^{k_1-1}\right)
  \end{align*}
  }
  \par
  Real world networks are always finite, hence $Q(k_1+k_2)=0$ for $k_1+k_2 > k_{max}$, where $k_{max}$ is the maximum degree. From Lemma \ref{lemma1}, $a_1 + a_2 = 0$. Now we prove that $a_2 > 0$. Let $k_1 + k_2 = m$. 
  
  \begin{align*}
  a_2 = \sum_{m=1}^{k_{max}}Q(m)\bigg[ &\frac{1}{q}\sum_{k_2=0}^{m}k_2^2{m \choose k_2}q^{k_2}r^{m-k_2}  \\ &-\frac{1}{r}\sum_{k_1=0}^{m}k_1(m-k_1){m \choose k_1}q^{m-k_1}r^{k_1}\bigg ]
  \end{align*}
  The summations are the second moments of a binomial random variable. $E \left[X^2 \right] = Var\left[X\right] + E \left[X \right]^2$, $E[X] = mq, \ Var[X] = mqr$.
  \par
  {\small
  \begin{align*}
  a_2 &= \sum_{m=1}^{k_{max}}Q(m)\bigg[\frac{1}{q}(mqr + m^2q^2)  -\frac{1}{r}(m^2r - mqr-m^2r^2)\bigg ] \\
  &= \sum_{m=1}^{k_{max}}Q(m)m \\
  &>0
  \end{align*}
  }
  \par
  Since $T_2<T_1$, $T_1a_1 + T_2a_2 < 0$, which completes the proof.
  \end{proof}
  
  \begin{lem} \label{lemma3}
  For $\psi \ \in \  (0,1)$, if $T_2 <T_1$ then $\psi$ is strictly increasing with respect to $q$, i.e, $\frac{d\psi}{dq} >0,  \ \forall \ q \ \in \ [0,1]$. 
  \end{lem}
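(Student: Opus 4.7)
The plan is to reduce $\psi$ and the fixed-point equation for $u$ from the bivariate form in Theorem 1 to univariate expressions in a single aggregate variable $w$, then extract the sign of $d\psi/dq$ by implicit differentiation. Plugging $P(k_1,k_2) = \binom{k_1+k_2}{k_2}q^{k_2}(1-q)^{k_1}P(k_1+k_2)$ (and the analogous formula for $Q$) into the sums and applying the binomial theorem to sum out the inner index collapses them to
\begin{align*}
\psi = \sum_{m=0}^{\infty} P(m)\, w^m =: G_P(w), \qquad u = \sum_{m=0}^{\infty} Q(m)\, w^m =: G_Q(w),
\end{align*}
where $\bar{T}(q) := T_1(1-q) + T_2 q$ and $w := 1 + (u-1)\bar{T}(q)$. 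Through the implicit dependence $u = u(q)$, the variable $w$ becomes a function of $q$ alone.

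Differentiating $\psi = G_P(w)$ gives $d\psi/dq = G_P'(w)\,dw/dq$, and $G_P'(w) > 0$ since $w \in (0,1]$ and $P(k)$ has positive mean. To compute $dw/dq$, differentiate $w = 1+(u-1)\bar{T}(q)$ and substitute $du/dq = G_Q'(w)\,dw/dq$; solving the resulting linear equation for $dw/dq$ yields
\begin{align*}
\frac{dw}{dq} = \frac{(u-1)(T_2-T_1)}{1 - \bar{T}(q)\, G_Q'(w)}.
\end{align*}
The numerator is a product of two negative factors: $T_2 < T_1$ by hypothesis, and $\psi \in (0,1)$ forces $w < 1$ and hence $u < 1$; so the numerator is strictly positive.

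The hard part is showing the denominator is strictly positive, that is, $\bar{T}(q)\, G_Q'(w^*) < 1$ at the physically relevant fixed point $u^*$. I would handle this by a crossing/convexity argument for the self-consistency map $f(x) := G_Q\bigl(1+(x-1)\bar{T}(q)\bigr)$ on $[0,1]$: the map is strictly increasing and convex because $G_Q$ has non-negative coefficients, $f(1) = 1$, $f(0) = G_Q(1-\bar{T}(q)) > 0$, and $f'(1) = \bar{T}(q)\,G_Q'(1) = \tilde{\nu}$ after reducing $\sum k_i Q(k_1,k_2)$ to marginal moments exactly as in Lemma 2. In the supercritical regime $\psi \in (0,1)$ we have $\tilde{\nu} > 1$, so $f$ lies above the diagonal at $u=0$ and strictly below it just to the left of $u=1$; the relevant fixed point $u^*\in(0,1)$ is therefore a crossing of $f$ from above, and strict convexity forbids tangency, giving $f'(u^*) = \bar{T}(q)\,G_Q'(w^*) < 1$. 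Hence the denominator is positive, $dw/dq > 0$, and finally $d\psi/dq = G_P'(w)\,dw/dq > 0$, as desired.
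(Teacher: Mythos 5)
Your proof is correct, and it takes a genuinely different route from the paper's. The paper keeps the bivariate generating functions $f(u,q)$ and $g(u,q)$ throughout: it invokes the combinatorial identity of Lemma \ref{lemma1} to show $\partial f/\partial q>0$ and $\partial g/\partial q>0$, argues geometrically that the upward shift of the convex curve $y=f(u,q)$ pushes the fixed point up (so $du^*/dq>0$), and then applies the chain rule $\frac{d\psi}{dq}=\frac{\partial g}{\partial q}+\frac{\partial g}{\partial u}\frac{du^*}{dq}$ with all terms positive. You instead observe that the binomial splitting $P(k_1,k_2)=\binom{k_1+k_2}{k_2}q^{k_2}(1-q)^{k_1}P(k_1+k_2)$ lets the double sums collapse to univariate generating functions evaluated at $w=1+(u-1)\bar{T}(q)$ with $\bar{T}(q)=T_1(1-q)+T_2q$; this renders Lemma \ref{lemma1} unnecessary (that identity is precisely the statement that the inner binomial sum depends on $(a,b)$ only through $a+b$) and yields the explicit expression $\frac{dw}{dq}=(u-1)(T_2-T_1)/\bigl(1-\bar{T}(q)G_Q'(w)\bigr)$, which makes the role of the hypothesis $T_2<T_1$ completely transparent. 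The one nontrivial step in your version, $\bar{T}(q)G_Q'(w^*)=f'(u^*)<1$, is settled by the same convexity/crossing geometry the paper uses for its fixed-point-shift argument, but in your formulation it doubles as the hypothesis of the implicit function theorem, so the differentiability of $u^*(q)$ — which the paper asserts somewhat informally — comes for free. The only points worth making explicit are that $\psi<1$ forces $w^*<1$ and hence $u^*<1$ (via monotonicity of $G_P$ and $\bar{T}(q)>0$), and that strict convexity of $f$ requires $Q(m)>0$ for some $m\geq 2$, which is automatic whenever a nontrivial fixed point $u^*<1$ exists. Both proofs are valid; yours is shorter and more quantitative, while the paper's stays within the bivariate framework it needs anyway for Lemma \ref{lemma2}.
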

  \begin{proof}
  Let, $\psi = g(u^*,q)$ where $u^*$ is the solution of the fixed point equation $u = f(u,q)$.
  \par
  {\small
  \begin{align*}
  g(u^*,q) &= \sum_{k_1,k_2}^{\infty}\alpha^{k_1}\beta^{k_2}P(k_1+k_2)  {k_1+k_2 \choose k_2}q^{k2}(1-q)^{k_1} \\
  f(u,q)&=\sum_{k_1,k_2}^{\infty}\alpha^{k_1}\beta^{k_2}Q(k_1+k_2){k_1+k_2 \choose k_2}q^{k2}(1-q)^{k_1}
  \end{align*}
  }
  \par
 where $\alpha = 1+(u^*-1)T_1$ and $\beta = 1+(u^*-1)T_2$. We first show that the solution to the fixed point equation is strictly increasing with $q$ . It can be easily shown that $\frac{\partial f(u,q)}{\partial u} > 0 $ and $\frac{\partial^2 f(u,q)}{\partial u^2} > 0$ for all $u, \ q \ \in [0,1]$. Thus $f$ is a convex function in $u$ for any fixed $q$. Also $f(0,q) > 0$ for all $q \ \in [0,1]$. 

  $\frac{\partial f(u,q)}{\partial q} =$
  \par
  {\small
  \begin{align*}
   &\sum_{k_1,k_2}^{\infty}\alpha^{k_1}\beta^{k_2}Q(k_1+k_2){k_1+k_2 \choose k_2}\left(k_2q^{k2-1}r^{k_1}-k_1q^{k2}r^{k_1-1}\right) \\
   &=\beta\sum_{k_1,k_2}^{\infty}Q(k_1+k_2){k_1+k_2 \choose k_2}k_2(\beta q)^{k2-1}(\alpha r)^{k_1} \\
   &-\alpha\sum_{k_1,k_2}^{\infty}Q(k_1+k_2){k_1+k_2 \choose k_2}k_1(\beta q)^{k_2}(\alpha r)^{k_1-1} 
  \end{align*}
  }
  \par
  From Lemma \ref{lemma1},  
  \begin{align*}
  &\sum_{k_1,k_2}^{\infty}Q(k_1+k_2){k_1+k_2 \choose k_2}k_2(\beta q)^{k2-1}(\alpha r)^{k_1} \\
   &-\sum_{k_1,k_2}^{\infty}Q(k_1+k_2){k_1+k_2 \choose k_2}k_1(\beta q)^{k_2}(\alpha r)^{k_1-1}  \\
   & = 0
  \end{align*}
  \par
  \begin{figure}
   \includegraphics[width = 0.45\textwidth]{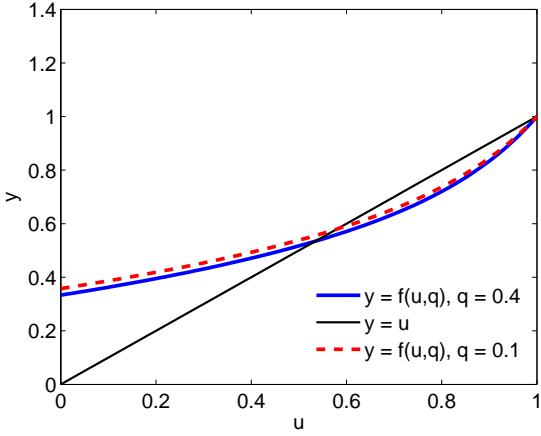}
  \caption{(Color Online) Fixed point equation, $T_1 = 0.7, \ T_2 = 0.1$}
  \label{fig:fpteqn}
  \end{figure}
  \par
    Now $\beta > \alpha$ because $T_2 < T_1$, which implies  $\frac{\partial f(u,q)}{\partial q} >0$.  
  \par
  Let $u_0^*$ be the solution of the fixed point equation for some $q=q_0$ and let $u_1^*$ be the solution to the fixed point equation when $q=q_1$ where $q_0 < q_1$. Also, $u_1^*$ exists since we have assumed $\psi < 1$. 
  \par
   The curve $y = f(u,q_1)$ lies above the curve $y = f(u,q_0)$ and hence, $u_0^* < f(u_0^*,q_1) $, or in other words, the curve $y = f(u,q_1)$ has shifted above the original fixed point. Consider set $I$ of points $u$ such that $f(u,q_1) >  f(u_0^*,q_1)$, clearly $u>u_0^*$ for all $u \ \in \ I$  (as $\frac{\partial f}{\partial u} > 0$). The line $y=u, \ \forall  \ u \in \ I$ lies above  the curve $y=f(u,q_0)$ for $u \geq u_0^*$ because $f(u)$ is convex in $u$ and the point $u=1$ is a fixed point (the line cannot be a tangent).  Since, $f(u,q_1) > f(u,q_0), \ \forall \ u \in \ [0,1) $, there must exist a point $u_1^*$ belonging to set $I$ which lies on the line $y=u$, i.e., $u_1^* = f(u_1^*,q_2)$.  Since $f(u,q)$ is continuous and differentiable in $u, q$ $\frac{du^*}{dq} > 0$. This is illustrated in Fig. \ref{fig:fpteqn}.
  
  The function $g(u^*,q)$ has the same structure as the function $f(u,q)$, and hence using the same procedure it can be shown that $\frac{\partial g(u,q)}{\partial q} > 0$. The total derivative  $\frac{d\psi}{dq}$ is given by:
  \begin{align*}
  \frac{d\psi}{dq} = \frac{\partial g}{\partial q} + \frac{\partial g}{\partial u}\frac{du^*}{dq}
  \end{align*}   
  Since all the terms on the right hand side of the above equation are positive, $\frac{d\psi}{dq} > 0$. 
  \end{proof}
  
\vspace{.1in}
  \par
  Theorem \ref{theorem1} follows from Lemma \ref{lemma2} and \ref{lemma3}.
  
  \begin{lem} \label{lemma4}
  For a fixed $T_1$ and $q$,  $\frac{d \psi}{dT_2} < 0 $ for $\psi \ \in \ [0,1)$.
  \end{lem}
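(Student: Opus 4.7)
The plan is to mirror the structure of Lemma \ref{lemma3}, replacing the role of $q$ by $T_2$, and to exploit the fact that $T_2$ enters $g$ and $f$ only through the factor $\beta = 1 + (u-1)T_2$, which has a sign-definite derivative with respect to $T_2$ whenever $u^* < 1$. Writing $\psi = g(u^*, T_2)$ where $u^*$ satisfies the fixed point equation $u = f(u, T_2)$, I will use the total derivative
\begin{align*}
\frac{d\psi}{dT_2} \;=\; \frac{\partial g}{\partial T_2} \;+\; \frac{\partial g}{\partial u}\, \frac{du^*}{dT_2},
\end{align*}
and show that both summands on the right are strictly negative.

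First I would handle the explicit partial $\frac{\partial g}{\partial T_2}$. Since $q$ and $T_1$ are fixed, the only $T_2$-dependence in $g$ is through $\beta^{k_2} = (1+(u-1)T_2)^{k_2}$. Differentiating term by term yields a sum in which every factor is nonnegative except for the common factor $(u-1)$, which is strictly negative because $\psi < 1$ forces $u^* < 1$. The same argument applied to $f$ shows $\frac{\partial f}{\partial T_2} < 0$. An analogous term-by-term inspection shows $\frac{\partial g}{\partial u} > 0$ (all factors nonnegative, with the differentiated factor positive for $u \in [0,1)$ when $T_i > 0$).

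Next I would control $\frac{du^*}{dT_2}$ via the implicit function theorem applied to $u - f(u,T_2) = 0$:
\begin{align*}
\frac{du^*}{dT_2} \;=\; \frac{\partial f / \partial T_2}{1 - \partial f/\partial u}\bigg|_{u=u^*}.
\end{align*}
The numerator is negative by the step above, so it remains to show the denominator is positive, i.e.\ $\frac{\partial f}{\partial u}\big|_{u^*} < 1$. This is where the convexity argument of Lemma \ref{lemma3} carries over: $f(\cdot, T_2)$ is convex in $u$ with $f(0,T_2) > 0$ and $f(1,T_2) = 1$, so whenever a second fixed point $u^* \in [0,1)$ exists (which is exactly the regime $\psi < 1$), the curve $y = f(u,T_2)$ crosses the line $y = u$ from above to below at $u^*$, forcing $\partial f/\partial u|_{u^*} < 1$. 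Hence $\frac{du^*}{dT_2} < 0$.

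Combining the signs, $\frac{\partial g}{\partial T_2} < 0$, $\frac{\partial g}{\partial u} > 0$, and $\frac{du^*}{dT_2} < 0$ give $\frac{d\psi}{dT_2} < 0$, as required. The main obstacle is the denominator issue in the implicit function step: one must invoke (not merely re-derive) the convexity/phase-transition picture to argue $\partial f/\partial u|_{u^*} < 1$, and verify that the relevant fixed point used to define $\psi$ is indeed this inner root rather than the trivial root at $u=1$, which is precisely what the assumption $\psi \in [0,1)$ guarantees.
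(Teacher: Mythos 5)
Your proof is correct and follows essentially the same route as the paper's: the identical total-derivative decomposition $\frac{d\psi}{dT_2} = \frac{\partial g}{\partial T_2} + \frac{\partial g}{\partial u}\frac{du^*}{dT_2}$ with the same sign analysis of each factor, the negativity of $\partial g/\partial T_2$ and $\partial f/\partial T_2$ coming from the factor $(u^*-1)<0$ in $\beta$. The only presentational difference is that you obtain $\frac{du^*}{dT_2}<0$ via the implicit function theorem with the denominator $1-\partial f/\partial u|_{u^*}>0$ justified by convexity of $f$ at the inner root, whereas the paper simply defers to the geometric two-curve argument of Lemma~\ref{lemma3}; the two are equivalent, and yours is the more explicit.
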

  \begin{proof}
  Now, $\psi = g(u^*,T_2)$, $u^*$ is the solution to the fixed point equation $u = f(u,T_2)$ where $f$ and $g$ are given by
  \par
  {\small
  \begin{align*}
  g(u^*,T_2) &= \sum_{k_1,k_2}^{\infty}\alpha^{k_1}\beta^{k_2}P(k_1+k_2)  {k_1+k_2 \choose k_2}q^{k2}(1-q)^{k_1} \\
  f(u,T_2)&=\sum_{k_1,k_2}^{\infty}\alpha^{k_1}\beta^{k_2}Q(k_1+k_2){k_1+k_2 \choose k_2}q^{k2}(1-q)^{k_1}
  \end{align*}
  }
  \par
  where $\alpha = 1+(u^*-1)T_1$ and $\beta = 1+(u^*-1)T_2$. 
  \par
  It is easy to show that $\frac{\partial f}{\partial T_2} < 0$ and $\frac{\partial g}{\partial T_2} < 0$. Using arguments similar to the ones described in Lemma \ref{lemma3} one can write, $\frac{du^*}{dT_2} < 0$.
   %The curve $y=f(u,T_2')$ lies below the curve $y=f(u,T_2)$ and hence, $u_0^* > f(u_)^*,T_{2}') $, or in other words, the curve $y=f(u,T_{2}')$ has shifted below the original fixed point. Consider set $I$ of points $u$ such that $f(u,T_2) < f(u_0^*,T_2)$, clearly $u<u_0^*$ (as $\frac{\partial f}{\partial u} > 0$). The line $y=u,\ \forall   \ u \in \ I$ lies below  the curve $y=f(u,T_2)$ for $u \leq u_0^*$ because $f(u)$ is convex in $u$ and the point $u=1$ is a fixed point (the line cannot be a tangent).  Since, $f(u,T_2') < f(u,T_2), \ \forall \ u \in \ [0,1)  $, there must exist a point $u_1^*$ belonging to set $I$ which lies on the line $y=u$, i.e., $u_1^* = f(u_1^*,T_{2}')$.  Since $f(u,T_2)$ is continuous and differentiable in $u, T_2$ $\frac{du}{dT_2} < 0$.
The total derivative  $\frac{d\psi}{dT_2}$ is given by: 
    \begin{align*}
    \frac{d\psi}{dT_2} = \frac{\partial g}{\partial T_2} + \frac{\partial g}{\partial u}\frac{du^*}{dT_2}
    \end{align*}   
    Since $\frac{\partial g}{\partial T_2}<0$, $\frac{du^*}{dT_2}<0$ and $ \frac{\partial g}{\partial u} >0 $, we have $\frac{d\psi}{dT_2} < 0$.
  \end{proof}

% Create the reference section using BibTeX:
\bibliographystyle{IEEEtran}
\bibliography{Information_Rumor_Spread}

% that's all folks
\end{document}